\title{Hardness and Parameterized Algorithms on Rainbow Connectivity problem}
\author{Prabhanjan Ananth} 
\address{Dept. of Computer Science and Automation, Indian Institute of Science.}
\email{prabhanjan@csa.iisc.ernet.in}
\author{Meghana Nasre}
\address{Dept. of Computer Science and Automation, Indian Institute of Science.}
\email{meghana@csa.iisc.ernet.in}
\author{Kanthi K Sarpatwar}
\address{Computer Science Department, University of Maryland, College Park, USA.}
\email{kanthik@gmail.com}
\newtheorem{theorem}{Theorem}
\newtheorem{lemma}[theorem]{Lemma}
\newtheorem{corollary}[theorem]{Corollary}
\newcommand {\coloring}{\chi}
\begin{document}
\maketitle
\begin{abstract}
A path in an edge colored graph is said
to be a rainbow path if no two edges on
the path have the same color. An edge colored
graph is (strongly) rainbow connected if there
exists a (geodesic) rainbow path between every pair
of vertices. The (strong) rainbow connectivity
of a graph $G$, denoted by ($src(G)$, respectively)
$rc(G)$ is the smallest number of colors
required to edge color the graph such that
$G$ is (strongly) rainbow connected. 
In this paper we study the rainbow connectivity problem and the strong rainbow
connectivity problem from a computational point of view.
Our main results can be summarised as below:
\begin{itemize}
\item For every fixed $k \geq 3$, it is NP-Complete to decide whether
$src(G) \leq k$ even when the graph $G$ is bipartite.
\item For every fixed odd $k \geq 3$, it is NP-Complete to decide whether $rc(G) \leq k$.
This resolves one of the open problems posed by Chakraborty et al. (J. Comb. Opt., 2011) where they prove the
hardness for the even case.
\item The following problem is {\em fixed parameter tractable}: Given a graph $G$, determine
the maximum number of pairs of vertices that can be rainbow connected using two colors.
\item For a directed graph $G$, it is NP-Complete to decide whether $rc(G) \leq 2$.
\end{itemize}
\end{abstract}

\section{Introduction}
This paper deals with the notion of {\em rainbow connectivity} and {\em strong rainbow connectivity}
of a graph.
Unless mentioned otherwise, all the graphs are assumed to be connected and undirected. 
Consider an edge coloring (not necessarily proper) of a graph $G = (V, E)$.  
A path between a pair of vertices is said to be a 
{\em rainbow path}, if no two edges
on the path have the same color.
If the edges of $G$ can be colored using $k$ colors such that, between every pair of vertices there exists a rainbow path
then $G$ is said to be $k$-rainbow connected.
Further, if the $k$-coloring ensures that between every pair of vertices one of its
geodesic \textit{i.e.,}
one of the shortest paths is a rainbow path, then $G$ is said to be $k$-strongly rainbow connected.
The minimum number of colors
required to (strongly) rainbow connect a graph $G$ is called the
(strong) rainbow connection number denoted by ($src(G)$, respectively) $rc(G)$. 

The concept of rainbow connectivity was recently introduced by Chartrand~et~al. in \cite{chartrand2008rainbow} as a measure
of strengthening connectivity. The rainbow connection problem, apart from being
an interesting combinatorial property, also finds an application in routing messages on cellular networks~\cite{chakraborty2008hardness}.
In their original paper\cite{chartrand2008rainbow}, Chartrand~et~al. 
determined $rc(G)$ and $src(G)$, in special cases where $G$ is a complete bipartite or multipartite graph.
Rainbow connectivity from a computational point 
of view was first studied by Caro~et~al.~\cite{caro2008rainbow} who conjectured that computing the rainbow
connection number of a given graph is NP-hard. 
This conjecture was confirmed by Chakraborty~et~al.~\cite{chakraborty2008hardness}, who proved
that even deciding whether rainbow connection number of a graph equals $2$ is NP-Complete. 
They further showed that the problem of deciding whether rainbow connection 
of a graph is at most $k$ is NP-hard 
where $k$ is an even integer. The status of the $k$-rainbow connectivity problem was left open for the case when $k$ is odd.
One of our results is to resolve this problem.
\\
\\
\noindent {\bf Our Results.} 
We present the following new results in this paper:
\begin{enumerate}
\item For every fixed $k \ge 3$, deciding whether $src(G) \leq k$, 
is NP-Complete even when $G$ is bipartite. As a consequence of 
our reduction, we show that it is NP-hard to approximate the 
problem of finding the strong connectivity of a graph by a 
factor of $n^{\frac{1}{2}-\epsilon}$, where $n$ is the number of vertices in $G$.
\item For every fixed odd $k \ge 3$, deciding whether $rc(G) \leq k$ is NP-Complete.
\item We consider the following natural extension of the $2$-rainbow connectivity problem:
Given a graph $G$, 
determine the maximum number of pairs of vertices that can be rainbow connected with two colors. 
We show that the above problem is {\em fixed parameter tractable} when the number of
pairs to be rainbow connected is a parameter.
\item We extend the notion of rainbow connectivity for directed graphs and show that
for a directed graph $G$ it is NP-Complete to decide whether $rc(G) \le 2$.
\end{enumerate}
In \cite{chakraborty2008hardness},
Chakraborty~et~al. 
introduced
the problem of {\em subset rainbow connectivity}, where in addition to the
graph $G = (V, E)$ we are given a set $P$ containing pairs of vertices. The goal is to answer
whether there exists an edge coloring of $G$ with $k$ colors such that every
pair in $P$ has a rainbow path. We also use the subset rainbow connectivity problem
and analogously define the subset strong rainbow connectivity problem to prove our
hardness results.

\noindent {\bf Related Work.} 
The problem of rainbow connectivity has received considerable attention after it was introduced by Chartrand~et~al. in \cite{chartrand2008rainbow}.
Caro~et~al.~\cite{caro2008rainbow}, Krivelevich~et~al.~\cite{Krivelevich:2010:RCG:1753069.1753071},
Chandran~et~al.~\cite{chandran2010rainbow} gave lower bounds for rainbow connection 
number of graphs as a function of the number of vertices and the minimum degree of the graph. 
Upper bounds were also given by Chandran~et~al.~\cite{chandran2010rainbow} 
for special graphs like interval graphs and AT-free graphs. In~\cite{basavaraju2010rainbow}, 
Basavaraju~et~al. gave a constructive argument to show that any graph $G$ 
can be colored with $r(r+2)$ colors in polynomial time 
where $r$ is the radius of the graph. 
The threshold function for random graph to have $rc(G)=2$ was studied by Caro~et~al.~\cite{caro2008rainbow}. 
In case of strong rainbow connection number, Li~et~al.~\cite{li2010strong} and Li and Sun~\cite{li2010src}
gave upper bounds on some special graphs. Interestingly, no good upper bounds are known for the strong rainbow connection number
in the general case.
 




\section{Strong rainbow connectivity}
\label{sec:src}
In this section, we prove the hardness result for the $k$-strong rainbow connectivity problem: given a graph $G$ 
and an integer $k \ge 3$, decide whether $src(G) \leq k$. In order to show the hardness of this problem, we 
first consider an intermediate problem called the {\em $k$-subset strong rainbow connectivity} problem. The input to the $k$-subset strong rainbow connectivity problem is a graph $G = (V, E)$ along with a set of pairs $P = \{(u, v): (u,v) \subseteq V \times V\}$ and an integer $k$. Our goal is to answer whether there exists
an edge coloring of $G$ with at most $k$ colors such that every pair $(u,v) \in P$ has a geodesic rainbow path.

The overall plan is to prove that $k$-subset strong rainbow connectivity is NP-hard by showing a reduction
from the vertex coloring problem. We then establish the polynomial time equivalence of the $k$-subset strong rainbow connectivity problem
and the $k$-strong rainbow connectivity problem. 


\subsection{$k$-subset strong rainbow connectivity}
Let $G = (V, E)$ be an instance of the $k$-vertex coloring problem. The problem is to decide whether $G$ can be vertex colored 
using $k$ colors if there exists an assignment of at most $k$ colors to the vertices of $G$
such that no pair of adjacent vertices are colored using the same color. This is one of the most
well-studied problems in computer science and is known to be NP-hard for $k \ge 3$.
Given an instance $G = (V, E)$ of the $k$-vertex coloring problem, we construct
an instance $\langle G' = (V', E'), P\rangle$ of the $k$-subset strong rainbow connectivity problem.

The graph $G'$ that we construct is a star, with one leaf vertex corresponding
to every vertex $v \in V$ and an additional central vertex $a$.
The set of pairs $P$ captures the edges in $E$, that is, for every edge $(u,v) \in E$ 
we have a pair $(u,v)$ in the set $P$. The goal is to color the edges of $G'$
using at most $k$ colors such that every pair in the set $P$ has a geodesic rainbow path.
More formally, we define the parameters $\langle G' = (V', E'), P \rangle$ of
the $k$-subset strong rainbow connectivity problem below:
\begin{eqnarray*}
V' = \{a\} \cup V; \hspace{0.2in}
E' = \{(a, v): v \in V\} \\
P  = \{(u, v): (u, v) \in E\}; \hspace{0.2in}
\end{eqnarray*}
We now prove the following lemma which establishes the hardness of the $k$-subset strong rainbow connectivity problem.

\begin{lemma}
\label{lem:vc-subset-equiv}
The graph $G = (V, E)$ is vertex colorable using $k ( \ge 3)$ colors iff the graph $G' = (V', E')$ can be
edge colored using $k$ colors such that for every pair $(u,v) \in P$ there is a geodesic rainbow path
between $u$ and $v$ in $G'$.
\end{lemma}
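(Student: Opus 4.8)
The plan is to exploit the rigidity of a star: the only path in $G'$ between two distinct leaves $u$ and $v$ is the length-two path $u$-$a$-$v$, which is therefore automatically the (unique) geodesic between them. Consequently, a geodesic rainbow path for a pair $(u,v)\in P$ exists if and only if the two edges $(a,u)$ and $(a,v)$ receive distinct colors under the edge coloring of $G'$. I would state this as an explicit preliminary observation before anything else.

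Given that observation, both directions become a direct translation between edge colorings of $G'$ and vertex colorings of $G$. I would set up the correspondence first: from an edge coloring $\chi'$ of $E'$ with colors $\{1,\dots,k\}$, define a vertex coloring $\chi$ of $G$ by $\chi(v)=\chi'((a,v))$; conversely, from a vertex coloring $\chi$ of $G$, color the edge $(a,v)$ of $G'$ with $\chi(v)$. Both maps use at most $k$ colors and are mutually inverse.

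For the forward direction, suppose $\chi$ is a proper $k$-vertex coloring of $G$. For each pair $(u,v)\in P$ we have $(u,v)\in E$, so $\chi(u)\neq\chi(v)$; hence in $G'$ the edges $(a,u)$ and $(a,v)$ receive distinct colors and $u$-$a$-$v$ is a geodesic rainbow path. For the converse, suppose $\chi'$ is a $k$-edge coloring of $G'$ under which every pair of $P$ has a geodesic rainbow path. For any edge $(u,v)\in E$, the pair $(u,v)\in P$ has a geodesic rainbow path, which by the preliminary observation must be $u$-$a$-$v$ with $\chi'((a,u))\neq\chi'((a,v))$, i.e.\ $\chi(u)\neq\chi(v)$; thus $\chi$ is a proper $k$-coloring of $G$.

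I expect no real obstacle here; the single point deserving care is the preliminary observation itself --- arguing that no other walk between the two leaves in the star could serve as a geodesic, so that the rainbow requirement collapses to one inequality between edge colors. It is worth noting that this equivalence in fact holds for every $k\ge 2$, the hypothesis $k\ge 3$ being inherited only from the NP-hardness of the $k$-vertex coloring instance we reduce from.
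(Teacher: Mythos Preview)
Your proposal is correct and follows essentially the same approach as the paper: both directions translate between a vertex coloring $\chi$ of $G$ and an edge coloring $\chi'$ of $G'$ via $\chi(v)=\chi'((a,v))$, using the fact that the unique (hence geodesic) path between leaves $u,v$ in the star is $u$-$a$-$v$. Your write-up is in fact slightly more explicit than the paper's (you isolate the star observation and the bijection up front, and note that $k\ge 3$ is only needed for the hardness of the source problem), but the underlying argument is identical.
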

\begin{proof}
Assume that $G$ can be vertex colored using $k$ colors; we show an assignment of colors to the edges
of the graph $G'$. Let $c$ be the color assigned to a vertex $v \in V$; we assign the color $c$ to
the edge $(a, v) \in E'$. Now consider any pair $(u,v)  \in P$. Recall that $(u,v) \in P$ because
there exists an edge $(u,v) \in E$. Since the coloring was a proper vertex coloring of $G$, the edges $(a, u)$
and $(a, v)$ in $G'$ are assigned different colors by our coloring. Thus, the path $u- a- v$ is a rainbow
path; further since that is the only path between $u$ and $v$ it is also a geodesic rainbow path.

To prove the other direction, assume that there exists an edge coloring of $G'$ using $k$ colors
such that between every pair of vertices in $P$ there is a geodesic rainbow path. It is easy to see that
if we assign the color $c$ of the edge $(a, v) \in E'$ to the vertex $v \in V$, we get a coloring
that is a proper vertex coloring for $G$.
\end{proof}
Recall the problem of subset rainbow connectivity where we are content with any rainbow
path
between every pair in $P$.
Note that our graph $G'$ constructed in the above reduction is a tree, in fact
a star and hence between every pair of vertices in $P$ there is exactly one path. Thus,
all the above arguments apply for the $k$-subset rainbow connectivity problem as well.
As a consequence we can conclude the following:

\begin{lemma}
\label{lem:hardness-src}
For every $k \ge 3$, both the problems $k$-subset strong rainbow connectivity and
$k$-subset rainbow connectivity are NP-hard even when the input graph $G$ is a star.
\end{lemma}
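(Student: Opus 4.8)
The plan is to reuse, essentially verbatim, the reduction from the $k$-vertex coloring problem already set up above. Recall that $k$-vertex coloring is NP-hard for every fixed $k \ge 3$, and that from an instance $G = (V,E)$ we built in polynomial time the star $G' = (V', E')$ with center $a$, a leaf for each $v \in V$, together with the pair set $P = \{(u,v) : (u,v) \in E\}$. Lemma \ref{lem:vc-subset-equiv} already shows that $G$ is $k$-vertex colorable if and only if $\langle G', P\rangle$ is a yes-instance of $k$-subset strong rainbow connectivity. Since the map $G \mapsto \langle G', P\rangle$ is computable in polynomial time and $G'$ is a star, this immediately yields NP-hardness of $k$-subset strong rainbow connectivity restricted to stars.

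For the non-strong variant, I would observe that in a star any two distinct vertices are joined by a unique path; in particular two leaves $u, v$ are joined only by $u - a - v$, which is therefore trivially a shortest path. Hence, for the instance $\langle G', P \rangle$, the statements ``there is a rainbow path between $u$ and $v$'' and ``there is a geodesic rainbow path between $u$ and $v$'' coincide for every pair in $P$. Consequently an edge coloring of $G'$ with $k$ colors certifies a yes-instance of $k$-subset strong rainbow connectivity if and only if it certifies a yes-instance of $k$-subset rainbow connectivity, so the very same reduction, together with Lemma \ref{lem:vc-subset-equiv}, establishes NP-hardness of $k$-subset rainbow connectivity on stars as well.

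There is essentially no obstacle here: the lemma is a direct corollary of Lemma \ref{lem:vc-subset-equiv} combined with the structural triviality that trees have unique paths. The only point that should be stated with a little care is that the equivalence between the strong and non-strong subset problems is being used only for the specific instances produced by this reduction (where $G'$ is a tree), and is not being claimed for general graphs. Writing out the argument therefore amounts to recording this observation and noting that the construction runs in polynomial time.
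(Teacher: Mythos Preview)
Your proposal is correct and follows essentially the same approach as the paper: invoke Lemma~\ref{lem:vc-subset-equiv} for the strong variant, then observe that since $G'$ is a star each pair in $P$ has a unique path (which is automatically geodesic), so the strong and non-strong subset problems coincide on these instances. The paper makes precisely this observation in the paragraph preceding the lemma, and your added remark that the strong/non-strong equivalence is only being used for the specific tree instances produced by the reduction is a welcome clarification.
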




\subsection{$k$-strong rainbow connectivity}
In this section, we establish the hardness of deciding whether a given graph can be strongly rainbow connected using $k$ colors. 

\begin{theorem}
\label{srcresult}
For every $k \ge 3$, deciding whether a given graph $G$ can be strongly rainbow colored using $k$ 
colors is NP-hard. Further, the hardness holds even when the graph $G$ is bipartite.
\end{theorem}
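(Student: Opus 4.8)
The plan is to reduce from the $k$-subset strong rainbow connectivity problem, which we already know to be NP-hard on stars by Lemma~\ref{lem:hardness-src}. Given an instance $\langle G' = (V', E'), P\rangle$ of that problem, where $G'$ is a star with centre $a$ and leaves indexed by a set $V$, and $P \subseteq V \times V$ is the set of pairs that must enjoy a geodesic rainbow path, I would construct a single graph $H$ whose strong rainbow connection number is at most $k$ if and only if the star instance is a yes-instance. The guiding idea: I want $H$ to contain a copy of the star $G'$ so that the geodesics between pairs in $P$ are forced to be exactly the length-$2$ paths through $a$, and simultaneously I want every pair of vertices \emph{not} in $P$ to have a short geodesic that can be rainbow-coloured ``for free'' using a small dedicated palette or using edges whose colours are irrelevant to the $P$-constraints.

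First I would take the star $G'$ and attach gadgetry that (i) keeps $H$ bipartite, (ii) keeps every relevant distance small so that the geodesic constraint on a pair $(u,v)\in P$ continues to mean ``the unique shortest $u$--$v$ path uses the two star edges $(a,u)$ and $(a,v)$,'' and (iii) makes all other vertex pairs trivially satisfiable. A natural choice is to add a second apex vertex $b$ adjacent to all leaves (so $a,b$ together with the leaves form a complete bipartite-like core $K_{2,|V|}$, which is bipartite), plus possibly a handful of extra vertices forming a padding path/clique-substitute that lets us realise the ``free'' colours; one must check that introducing $b$ does not create an alternative length-$2$ geodesic between two leaves $u,v$ — it does, namely $u$--$b$--$v$, so I would instead subdivide or use several apexes $b_1,\dots,b_t$ so that the leaf-to-leaf distance through any $b_i$ is strictly larger than $2$ (e.g.\ make the $b_i$ mutually at distance forcing a detour), while $u$--$a$--$v$ stays the unique geodesic. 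The edges incident to the $b_i$'s and to the padding vertices get a fixed palette disjoint in spirit (but drawn from the same $k$ colours) chosen once and for all so that every pair touching those vertices has a rainbow geodesic regardless of how the $a$-edges are coloured; the only genuine freedom left in any $k$-colouring of $H$ is the colouring of the star edges $(a,v)$, and the only surviving constraints are exactly the $P$-pairs, reducing $src(H) \le k$ to the star $k$-subset instance.

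The verification then splits into the two standard directions: from a proper $k$-vertex-colouring of the original graph (equivalently, a good $k$-colouring of the star) I colour the $(a,v)$ edges accordingly and colour the rest of $H$ by the fixed scheme, and I check pair by pair (leaf--leaf in $P$, leaf--leaf not in $P$, leaf--apex, apex--apex, leaf/apex--padding) that a geodesic is rainbow; conversely, given a strong rainbow $k$-colouring of $H$ I restrict to the star edges and argue the induced vertex colouring of $G$ is proper, using that the forced geodesic for $(u,v)\in P$ is $u$--$a$--$v$ so its two edges have distinct colours. Finally, to keep $H$ bipartite I just make sure every gadget cycle has even length — achievable by subdividing offending edges, which only lengthens some distances and hence, if anything, \emph{helps} pin down the geodesics. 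The hardness for \emph{fixed} $k\ge 3$ (rather than $k$ as part of the input) follows because Lemma~\ref{lem:hardness-src} gives hardness for each fixed $k\ge 3$ and the reduction above does not change $k$.

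The main obstacle I anticipate is the tension in item (ii) above: any device I add to make non-$P$ pairs cheaply satisfiable tends to create new short paths between the leaves, which can turn a non-geodesic alternative into a \emph{geodesic} alternative for some $(u,v)\in P$ and thereby destroy the reduction. Getting the gadget distances exactly right — close enough to satisfy the ``free'' pairs with few colours, yet never short enough to compete with $u$--$a$--$v$ — while simultaneously staying bipartite, is the delicate part, and I expect the proof to spend most of its effort on a careful case analysis of distances in $H$ and on exhibiting the explicit fixed colouring of the non-star edges that works uniformly for all yes-instances.
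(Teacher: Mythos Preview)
Your overall plan---reduce from $k$-subset strong rainbow connectivity on a star and build a supergraph $H$ in which the only constrained geodesics are the $u$--$a$--$v$ paths for $(u,v)\in P$---is exactly the paper's plan. But the concrete gadget you sketch does not work, and the fix you propose actually moves in the wrong direction.

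You first try a second apex $b$ adjacent to all leaves, notice that this gives every leaf pair (including those in $P$) an alternative length-$2$ geodesic, and then propose to repair this by making the routes through the $b_i$'s \emph{longer} than $2$. That repair kills the gadget: if every alternative leaf--leaf path has length $>2$, then for a non-$P$ pair $(v_i,v_j)$ the unique geodesic is still $v_i$--$a$--$v_j$, and any strong rainbow colouring of $H$ must again make those two star edges differently coloured. You have re-imposed exactly the constraint you were trying to relax, so the reduction collapses back to ``$src(H)\le k$ iff \emph{all} leaf pairs are rainbow through $a$,'' which is not the subset problem.

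The idea you are missing is to give each non-$P$ pair its \emph{own} length-$2$ bypass rather than a shared apex. The paper introduces, for every $(v_i,v_j)\in (L\times L)\setminus P$, a private vertex $w_{i,j}$ adjacent to $v_i$ and $v_j$; now $v_i$--$w_{i,j}$--$v_j$ is a second geodesic for that pair and can be coloured with two fixed colours independently of the star edges, while a $P$-pair $(v_i,v_j)$ has no such $w_{i,j}$ and its unique length-$2$ path remains $v_i$--$a$--$v_j$. Handling the new vertices themselves (pairs involving some $w_{i,j}$ or the auxiliary $u_i$) is done by a second layer $V_2$ joined completely to the first layer and to $a$, with a fixed $3$-colour pattern on those extra edges; the resulting graph is bipartite with parts $\{a\}\cup V_1$ and $L\cup V_2$. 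Once you have the per-pair bypass idea, the distance analysis you worried about becomes a short case check rather than a delicate balancing act.
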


\begin{proof}
We reduce the $k$-subset strong rainbow connectivity problem to the $k$-strong rainbow connectivity problem. Let $\langle G = (V, E), P\rangle$ 
be an instance of the $k$-subset strong rainbow connectivity problem. Using Lemma~\ref{lem:hardness-src},
 we know that
$k$-subset strong rainbow connectivity is NP-hard even when $G$ is a star as well
as the pairs $(v_i, v_j) \in P$ are such that both $v_i$ and $v_j$ are leaf nodes of the star.
We assume both these properties on the input $\langle G, P\rangle$ and use them crucially in our reduction.
Let us denote the central vertex of the star $G$ by $a$ and the leaf vertices by $L=\{v_1, \ldots ,v_n\}$, that is,
$V = \{a\} \cup L$. Using the graph $G$ and the pairs $P$,
we construct the new graph $G' = (V', E')$ as follows: for every leaf node $v_i \in L$, we introduce two new
vertices $u_i$ and $u_i'$. For every pair of leaf nodes $(v_i, v_j) \in (L \times L) \setminus P$, we introduce two new vertices 
$w_{i,j}$ and $w_{i,j}'$. 

\begin{eqnarray*}
V' &=& V \cup V_1 \cup V_2 \\
V_1 &=& \{u_i: i \in \{1, \ldots, n\} \} \cup \{w_{i,j}: (v_i,v_j) \in (L \times L) \setminus  P\} \\
V_2 &=& \{u'_i: i \in \{1, \ldots, n\} \} \cup \{w_{i,j}': (v_i,v_j) \in (L \times L) \setminus  P\}
\end{eqnarray*}
The edge set  $E'$ is be defined as follows:
\begin{eqnarray*}
E' &=& E \cup E_1 \cup E_2 \cup E_3\\
E_1 &=& \{ (v_i, u_i): v_i \in L, u_i \in V_1\} \cup \\& & \{(v_i, w_{i,j}), (v_j, w_{i,j}):  (v_i,v_j) \in (L \times L) \setminus  P\} \\
E_2 &=& \{ (x, x'): x \in V_1, x' \in V_2\} \\
E_3 &=& \{ (a, x'): x' \in V_2 \}
\end{eqnarray*}
We now prove that $G'$ is $k$-strong rainbow connected iff $\langle G, P \rangle$ is $k$-subset strong rainbow connected.
To prove one direction, we first note that,
for all pairs $(v_i, v_j) \in P$, there is a two length path $v_i - a - v_j$ in $G$ and this path is also present in $G'$.
Further, this path is the only two length path in $G'$ between $v_i$ and $v_j$; hence any strong rainbow
coloring of $G'$ using $k$ colors must make this path a rainbow path. This implies that if $G$ cannot be edge
colored with $k$ colors such that every pair in $P$ is strongly rainbow connected, the graph $G'$ cannot be strongly
rainbow colored using $k$ colors.

To prove the other direction, assume that there is an edge coloring $\coloring: E \rightarrow \{c_1, c_2, \ldots, c_k\}$ 
of $G$ such that all pairs
in $P$ are strongly rainbow connected. We extend this edge coloring of $G$ to an edge coloring of $G'$, denoted by $\coloring'$, such that $G'$ is strong rainbow connected:
\begin{itemize}
\item We retain the color on the edges of $G$, {\em i.e.} $\coloring'(e) = \coloring(e): e \in E$. 
\item For each edge $(v_i, u_i) \in E_1$, we set $\coloring'(v_i, u_i) = c_3$.
\item For each pair of edges  $\{(v_i, w_{i,j}), (v_j, w_{i,j})\} \in E_1$, we set $\coloring'(v_i, w_{i,j}) = c_1$,
$\coloring'(v_i, w_{i,j}) = c_2$ (Assume without loss of generality that $i < j$). 
\item The edges in $E_2$ form a complete bipartite graph between the vertices in $V_1$ and $V_2$.
To color these edges, we  pick a perfect matching $M$ of size $|V_1|$ and assign $\coloring'(e) = c_1, \forall e \in E_2 \cap M$
and $\coloring'(e) = c_2, \forall e \in E_2 \setminus M$.
\item Finally, for each edge $(a, x') \in E_3$, we set $\coloring'(a, x') = c_3$.
\end{itemize}
It is straightforward to verify that this coloring makes $G'$ strong rainbow connected. This completes the proof of NP-hardness of the $k$-strong rainbow connectivity problem.  
\par We further note that the graph $G'$ constructed above is in fact
bipartite. The vertex set $V'$ can be partitioned into two sets $A$ and $B$, where $A = \{a\} \cup V_1$
and $B = L \cup V_2$ such that there are no edges between vertices in the same partition. This establishes the fact that the $k$-strong rainbow connectivity problem is NP-hard even for the bipartite case.
\end{proof}
%
From the same construction when $k = 3$, it follows
that deciding whether a given graph $G$ can be rainbow colored using at most $3$ colors is NP-hard.
To see this, note that between any pair of vertices $(v_i, v_j) \in P$, a path in $G'$ that is not
contained in $G$ is of length at least $4$ and the shortest path between $v_i$ and $v_j$ is in $G$. Further, we always color the edges $E' \setminus E$ using
$3$ colors; hence none of these paths can be rainbow path. Thus, we conclude the following corollary.

\begin{corollary}
Deciding whether $rc(G) \le 3$ is NP-hard even when the graph $G$ is bipartite.
\end{corollary}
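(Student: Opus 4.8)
The plan is to reuse verbatim the graph $G'$ built in the proof of Theorem~\ref{srcresult}, instantiated with $k = 3$, and to argue that $rc(G') \le 3$ if and only if the underlying instance $\langle G, P\rangle$ of $3$-subset strong rainbow connectivity is a yes-instance. Since Lemma~\ref{lem:hardness-src} (together with Lemma~\ref{lem:vc-subset-equiv}) makes $3$-subset strong rainbow connectivity NP-hard even when $G$ is a star with every pair of $P$ lying between two leaves --- which is exactly the form assumed when $G'$ is constructed --- and since $G'$ was already observed to be bipartite, this equivalence gives the corollary at once.

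The easy direction is the ``if''. When $\langle G,P\rangle$ is a yes-instance I simply invoke the explicit colouring $\coloring'$ exhibited in the proof of Theorem~\ref{srcresult}: for $k=3$ it uses only the three colours $c_1,c_2,c_3$ and was shown to make $G'$ strongly rainbow connected, and a geodesic rainbow path is a fortiori a rainbow path, so $rc(G') \le 3$.

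The substance is the ``only if'' direction, which I would prove in contrapositive form: if $\langle G,P\rangle$ is a no-instance then every $3$-colouring of $E'$ leaves some pair $(v_i,v_j)\in P$ without a rainbow path. The key structural fact to establish first is a distance lemma: for every $(v_i,v_j)\in P$ we have $d_{G'}(v_i,v_j)=2$, the unique $v_i$--$v_j$ path of length $2$ is $v_i - a - v_j$, and every other $v_i$--$v_j$ path has length at least $4$. This follows from a short case analysis on the neighbourhoods: a leaf $v_i$ is adjacent only to $a$, to $u_i$, and to certain vertices of the form $w_{\cdot,\cdot}$; there are no edges inside $L$ and no edges between $V_2$ and $L$; hence a $v_i$--$v_j$ path that leaves $v_i$ along an edge other than $v_ia$ must pass into $V_1$ and then into $V_2$ (or back into $L$), and one checks that no such walk of length $3$ can reach $v_j$. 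Given any $3$-colouring of $G'$, its restriction to $E$ is a $3$-colouring of the star $G$; since $\langle G,P\rangle$ is a no-instance, for some $(v_i,v_j)\in P$ the edges $(a,v_i)$ and $(a,v_j)$ get the same colour, so $v_i-a-v_j$ is not rainbow, while every other $v_i$--$v_j$ path has length $\ge 4$ and thus cannot be rainbow under a $3$-colouring. Hence $G'$ is not $3$-rainbow connected.

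I expect the only genuine work to be the distance lemma --- checking exhaustively that no $v_i$--$v_j$ detour has length $3$ --- which is exactly where the precise structure of $E' = E \cup E_1 \cup E_2 \cup E_3$ (in particular the absence of leaf--leaf edges and of $L$--$V_2$ edges) is used; everything else is bookkeeping layered on top of Theorem~\ref{srcresult} and Lemma~\ref{lem:hardness-src}.
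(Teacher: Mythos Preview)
Your proposal is correct and follows the same route as the paper: reuse the construction of $G'$ from Theorem~\ref{srcresult} at $k=3$, invoke the colouring $\coloring'$ for the forward direction, and for the converse use the fact that any $v_i$--$v_j$ path in $G'$ not contained in $G$ has length at least $4$ (your distance lemma), so a $3$-colouring cannot rainbow-connect such a pair through a detour. The paper's argument is terser---it simply asserts the length-$\ge 4$ fact without the neighbourhood case analysis you sketch---but the structure and the key observation are identical.
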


As a consequence of the reduction from the $k$-subset strong rainbow connectivity to the $k$-strong rainbow connectivity, we have the following result:

\begin{theorem}
\label{thm:inapprox}
There is no polynomial time algorithm that approximates strong rainbow connection number
of a graph $G = (V, E)$ within a factor of $n ^{\frac{1}{2} - \epsilon}$, unless $NP=ZPP$.
Here $n$ denotes the number of vertices of $G$.
\end{theorem}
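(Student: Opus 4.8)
The plan is to observe that the two reductions built above --- from $k$-vertex coloring to $k$-subset strong rainbow connectivity (Lemma~\ref{lem:vc-subset-equiv}) and from $k$-subset strong rainbow connectivity to $k$-strong rainbow connectivity (Theorem~\ref{srcresult}) --- are not merely parameterized reductions for each fixed $k$, but are in fact gap preserving, and then to feed them the inapproximability of the chromatic number due to Feige and Kilian: for an $N$-vertex graph $H$, one cannot approximate $\chi(H)$ within a factor $N^{1-\epsilon}$ for any fixed $\epsilon>0$ unless $NP=ZPP$. Concretely, given such an $H$ I would form the star instance $\langle G, P\rangle$ as in Section~\ref{sec:src} and then the graph $G'$ of Theorem~\ref{srcresult}, and argue that $src(G') = \chi(H)$ whenever $\chi(H)\ge 3$, while $|V(G')| = O(N^2)$ (and $G'$ does not depend on $k$, so a single instance serves for all color bounds simultaneously).

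For the identity $src(G')=\chi(H)$ I would revisit both equivalences as statements about the \emph{optimal} number of colors rather than about a single threshold. Lemma~\ref{lem:vc-subset-equiv} already yields that the least number of colors with which every pair of $P$ is geodesically rainbow in the star $G$ is exactly $\chi(H)$, since for $(u,v)\in E$ the path $u-a-v$ is the only path and hence forces $\coloring(au)\neq\coloring(av)$. The reduction in Theorem~\ref{srcresult} is color-count preserving in both directions once three colors are available: on one hand, for $(v_i,v_j)\in P$ the path $v_i-a-v_j$ is the \emph{unique} shortest path in $G'$ (the only common neighbour of $v_i$ and $v_j$ is $a$, because $w_{i,j}$ exists only for non-pairs), so any strong rainbow coloring of $G'$ with $m$ colors restricts to an $m$-coloring of $G$ that geodesically connects $P$, giving $m\ge\chi(H)$; on the other hand, the explicit extension exhibited in the proof of Theorem~\ref{srcresult} recolors $E'\setminus E$ using only the colors $c_1,c_2,c_3$, so from an optimal $\chi(H)$-coloring of $\langle G,P\rangle$ (with $\chi(H)\ge 3$) it produces a strong rainbow coloring of $G'$ with exactly $\chi(H)$ colors. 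Hence $src(G')=\chi(H)$ for $\chi(H)\ge 3$; the cases $\chi(H)\le 2$, i.e.\ $H$ edgeless or bipartite, are decidable in polynomial time and need no reduction.

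Next I would bound the size of $G'$. Since $V(G') = V\cup V_1\cup V_2$ with $|V|=N$ and $|V_1|=|V_2|=O(N^2)$, we get $n:=|V(G')|\le cN^2$ for an absolute constant $c$, while trivially $n\ge N$. Consequently, for every fixed $\epsilon>0$ and all sufficiently large $N$ we have $n^{\frac{1}{2}-\epsilon}\le(cN^2)^{\frac{1}{2}-\epsilon}\le N^{1-\epsilon}$. Now suppose some polynomial-time algorithm approximated $src$ to within a factor $n^{\frac{1}{2}-\epsilon}$. On input $H$ we would first decide in polynomial time whether $\chi(H)\le 2$ and, if so, output the exact value; otherwise we construct $G'$, run the assumed algorithm on it, and return its output. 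By the equality $src(G')=\chi(H)$ this yields a polynomial-time $N^{1-\epsilon}$-approximation of $\chi(H)$, which, by the theorem of Feige and Kilian, is impossible unless $NP=ZPP$.

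The step I expect to require the most care is the exact identity $src(G')=\chi(H)$ for $\chi(H)\ge 3$: the proof of Theorem~\ref{srcresult} only claims that the displayed coloring makes $G'$ strongly rainbow connected for a fixed $k$, so one must check that this coloring in fact uses at most $\chi(H)$ colors, and, for the matching lower bound, verify carefully which shortest paths in $G'$ are unique so that no coloring with fewer colors can succeed. The remaining ingredients --- the quadratic blow-up and the elementary manipulation of exponents --- are routine.
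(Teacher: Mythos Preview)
Your proposal is correct and follows essentially the same route as the paper: compose the two reductions, observe that the construction of $G'$ in Theorem~\ref{srcresult} is independent of $k$ and uses only three auxiliary colors so that $src(G')=\chi(H)$ whenever $\chi(H)\ge 3$, bound $|V(G')|=O(N^2)$, and invoke Feige--Kilian. You are in fact more careful than the paper on two points --- spelling out the exact equality $src(G')=\chi(H)$ via uniqueness of the geodesic $v_i$--$a$--$v_j$ for pairs in $P$, and disposing of the bipartite case $\chi(H)\le 2$ separately --- but these are refinements of the same argument rather than a different approach.
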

\begin{proof}
In order to prove the NP-hardness of the $k$-strong
rainbow connectivity problem, we started with an instance $G = (V, E)$ of the
$k$-vertex coloring problem, and obtained an instance $\langle G' = (V', E'), P\rangle$ of the
$k$-subset strong rainbow connectivity problem, where $P$ denotes the set of pairs. From $\langle G', P \rangle$ we obtained an
instance $G'' = (V'', E'')$ of the $k$-strong rainbow connectivity problem. 
Further, note that, $G$ is vertex colorable using at most $k$ colors if and only if $src(G'') \leq k$.
Let $n$ denote the number of vertices of $G$, then the graph $G'$ has $n+1$ vertices.
We now bound the number of vertices in the graph $G''$. 
Recall that during the construction of $G''$ we added two new vertices for every vertex in $G'$ 
and two new vertices for every pair in $P$. Thus the total number of vertices (denoted by $N$) in $G''$ can be
bound from above as:
\begin{eqnarray*}
N &\leq& 2 {n \choose 2} + 3n + 3 \\
&\leq& n^2 + 2n + 3 \\
&\leq& 2n^2
\end{eqnarray*}
Thus, if there is a ${N}^{\frac{1}{2} - \epsilon}$ approximation algorithm for computing strong
rainbow connectivity of a graph, then there is a $(\sqrt{2}n)^{1 - \epsilon}$ approximation algorithm
for the vertex coloring problem. 
But we know from \cite{feige1996zero}, that it is hard to approximate chromatic number within a factor of $\Omega(n^{1-\epsilon})$ unless $NP=ZPP$ and hence the result.  
\end{proof}

\section {Rainbow connectivity}
\label{rc}

In this section we investigate the complexity of deciding whether the rainbow connection
number of a graph $G$ is at most $k$. 
We prove the NP-hardness of the $k$-rainbow connectivity problem i.e., deciding whether $rc(G) \leq k$, when $k$ is odd.
We recall from Lemma~\ref{lem:hardness-src} that the $k$-subset rainbow connectivity problem 
is NP-hard.  In the following theorem, we give a reduction of the $k$-subset rainbow connectivity problem to the $k$-rainbow connectivity problem. 
\begin{theorem}
\label{rcresult}
For every odd integer $k \geq 3$, deciding whether $rc(G) \leq k$ is NP-Complete.
\end{theorem}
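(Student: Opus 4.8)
The plan is to prove NP-hardness by a reduction from the $k$-subset rainbow connectivity problem, which by Lemma~\ref{lem:hardness-src} is NP-hard (even on stars) for every $k\ge 3$, and in particular for every odd $k\ge 3$; membership in NP is immediate, since a $k$-edge-coloring together with, for each pair of vertices, a rainbow path joining them is a polynomial-size certificate that can be verified in polynomial time. So the whole content is to construct, from an instance $\langle G,P\rangle$ of $k$-subset rainbow connectivity, a graph $G'$ with $rc(G')\le k$ if and only if $\langle G,P\rangle$ is a yes-instance.

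I would take $G$ to be a star with center $a$ and leaves $L$ and $P$ a set of pairs of leaves, as furnished by Lemma~\ref{lem:hardness-src}; after adjoining, if necessary, one extra leaf that lies in no pair of $P$ (which does not change the answer), I may assume some leaf $v_x\in L$ occurs in no pair. The graph $G'$ keeps $a$, the leaves, and the star edges --- whose colors will encode a solution, exactly as in Lemma~\ref{lem:vc-subset-equiv} --- and in addition, for every leaf $v_i$ attaches a \emph{private path} of length about $(k-1)/2$ from $v_i$ to a fresh endpoint $z_i$, and, for every non-pair $(v_i,v_j)\notin P$, joins $z_i$ and $z_j$ by a short connector. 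The length $(k-1)/2$ is an integer precisely because $k$ is odd; this is the only place where the parity of $k$ enters, and it is why the same construction does not settle the even case (handled by Chakraborty et al.). The pair-free leaf $v_x$ is useful because its endpoint $z_x$ then becomes joined to \emph{every} other $z_i$ and so acts as a hub that keeps $\mathrm{diam}(G')$ small. (The case $k=3$ is already covered by the corollary above; one can view the present $G'$ as the extension of that construction obtained by subdividing the relevant gadget edges.)

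For correctness I would argue the two directions separately. For the ``no'' direction --- the one that forces the encoding --- the key claim is that for every pair $(v_i,v_j)\in P$ the path $v_i-a-v_j$ is the \emph{only} path of length at most $k$ between $v_i$ and $v_j$ in $G'$: any other $v_i$--$v_j$ path must leave $v_i$ along its private path or detour through another leaf, and either way is forced through two private paths (of total length about $k-1$) together with the step(s) joining their far endpoints, which makes its length exceed $k$; since a rainbow path uses at most $k$ edges, the only rainbow possibility is $v_i-a-v_j$, so any $k$-rainbow-coloring of $G'$ satisfies $\coloring(a,v_i)\neq\coloring(a,v_j)$, and restricting to the star edges solves $\langle G,P\rangle$ as in Lemma~\ref{lem:vc-subset-equiv}. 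For the ``yes'' direction I would start from a coloring of $G$'s star edges witnessing $\langle G,P\rangle$ and extend it: copy those colors on the star edges, color each private path with a fixed list of distinct colors, and color the connectors so that a $P$-pair uses $v_i-a-v_j$, a non-pair whose two star edges already differ also uses $v_i-a-v_j$, a monochromatic non-pair uses its private-path route through the connector, and every remaining pair (those involving the new vertices) gets a rainbow path via the hub $z_x$ --- feasible because $\mathrm{diam}(G')\le k$. This case analysis is routine but has several cases.

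The step I expect to be the real obstacle is making the arithmetic of the gadget close for every odd $k\ge 3$ at once: the detours between $P$-pairs must be strictly longer than $k$ (which pushes the private paths to length about $(k-1)/2$), yet $\mathrm{diam}(G')$ must be at most $k$ (otherwise $rc(G')>k$ for every instance and the reduction is vacuous), non-$P$ pairs must still admit rainbow paths of length at most $k$, and all of $G'$ must be $k$-rainbow-colorable whenever $\langle G,P\rangle$ is a yes-instance. Balancing these four requirements --- long enough to block the bad detours, short enough to keep both the diameter and the palette under control --- is the crux, and it is exactly here that one exploits $k$ being odd; a few of the smallest odd values of $k$ may need to be checked by hand or folded into the $k=3$ corollary.
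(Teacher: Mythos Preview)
Your overall plan --- reduce from $k$-subset rainbow connectivity, attach length-$m$ gadgets (with $k=2m+1$) to force $P$-pairs back through the star, connect the gadget tips for non-$P$ pairs, and add a hub to control the diameter --- is exactly the paper's strategy, and the ``no'' direction and the diameter bound work essentially as you describe.

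The gap is in the ``yes'' direction, specifically for a non-$P$ pair $(v_i,v_j)$ whose two star-edges receive the \emph{same} color under the given subset coloring. In your $G'$ the only $v_i$--$v_j$ paths of length at most $k$ are the (non-rainbow) path $v_i\text{--}a\text{--}v_j$ and the private-path route $v_i\text{--}\cdots\text{--}z_i\text{--}z_j\text{--}\cdots\text{--}v_j$, which has length exactly $2m+1$. Making the latter rainbow forces the $m$ colors on $v_i$'s private path, the $m$ colors on $v_j$'s private path, and the connector color to be pairwise distinct; in particular the two private paths must carry \emph{disjoint} $m$-subsets of $\{c_1,\dots,c_k\}$. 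But the leaves sharing a fixed star color form a class in which \emph{every} pair is a non-$P$ pair (they are an independent set of the original vertex-coloring instance), so all their private-path color sets would have to be pairwise disjoint $m$-subsets of a $(2m{+}1)$-element palette --- impossible once the class contains three or more leaves. Hence ``color each private path with a fixed list of distinct colors'' cannot be completed to a valid $k$-coloring in general, and this step is not the routine case analysis you anticipate.

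The paper repairs exactly this point by attaching \emph{two} length-$m$ paths to each $v_i$, with tips $u_{i,m}$ and $u_{i,2m}$: one path is colored $c_1,\dots,c_m$ and the other $c_{m+1},\dots,c_{2m}$, uniformly for every $i$. For a non-$P$ pair one then routes through the ``high'' path of $v_i$, a connector edge colored $c_{2m+1}$, and the ``low'' path of $v_j$, obtaining a rainbow path of length $2m+1$ regardless of the star colors. Two fresh hub vertices $x,y$, adjacent to all tips, play the role of your $z_x$. With this doubling, your outline goes through essentially unchanged; the parity of $k$ is still used only to make $m=(k-1)/2$ an integer.
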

\begin{proof}
Let $\langle G = (V, E), P \rangle$ be an instance of the $k$-subset rainbow connectivity
problem. Since $k$ is assumed to be odd, let $k = 2m + 1$ where $m \in \mathbb{N}$.
Let us denote the vertices of $G$ as $V=\{v_1, \ldots ,v_n\}$. 
Given the graph $G$ and a set of pairs of vertices $P$, we construct an instance $G' = (V', E')$
of the $k$-rainbow connectivity problem as follows:
For each vertex $v_i \in V$, we add $2m$ new vertices denoted by $u_{i, j}$ where $j \in \{1, \ldots, 2m\}$.
Further, we add the following two paths: $v_i- u_{i,1}- u_{i,2} \cdots -u_{i,m}$ and 
$v_i- u_{i,m+1}- u_{i,m+2} \cdots -u_{i,2m}$. We also add edges $(u_{i,m}, u_{i,2m})$ and $(u_{i,1}, u_{i,m+1})$
(if $m=1$, we only add one edge). For every pair of vertices $(v_i,v_j) \in (V \times V) \setminus P$: 
we add the edges $(u_{i,m}, u_{j,2m})$ and $(u_{i,2m}, u_{j,m})$.
We add two more new vertices $x,y$ and for every $v_i \in V$
we add the following edges:  $(x,u_{i,m})$, $(x,u_{i,2m})$, $(y,u_{i,m})$ and $(y,u_{i,2m})$.
Figure~\ref{fig:long-paths} shows a subgraph of the graph $G'$. The construction shows extra vertices added corresponding to $v_i$
and $v_j$ such that the pair $(v_i, v_j) \in (V \times V) \setminus P$.
More formally, the vertex set $V'$ can be defined as:
\begin{eqnarray*}
V' &=& V \cup V_{1,m}\cup V_{m+1,2m} \cup V_{x,y} \\
V_{1,m} &=& \{u_{i,j}: v_i \in V, j \in \{1, \ldots, m\}\} \\
V_{m+1, 2m} &=& \{u_{i,j} : v_i \in V, j \in \{m+1, \ldots, 2m\}\} \\
V_{x,y} &=& \{x, y\}
\end{eqnarray*}
The edge set $E'$ can be defined as: 
\begin{eqnarray*}
E'&=& E \cup E_1 \cup E_2 \cup E_{x,y} \\
E_1 &=& \{(u_{i,j},u_{i,j+1})\ :\ v_i \in V,\ j \in \{1, \ldots ,m\},\ j \pmod m \neq 0\} \cup \\
&\ & \{(u_{i,1},u_{i,m+1}),(u_{i,m},u_{i,2m}): v_i \in V\} \\
E_2 &=& \{(u_{i,m}, u_{j,2m}), (u_{i,2m}, u_{j,m})\ :\ (v_i,v_j) \in (V \times V) \setminus P \} \cup \\
&=& \{(v_i,u_{i,1}),(v_i,u_{i,m+1}): v_i \in V\}\\
E_{x,y} &=& \{(x,u_{i,m}),(x,u_{i,2m}),(y,u_{i,m}),(y,u_{i,2m})\ :\ i \in \{1, \ldots ,n\} \}
\end{eqnarray*}

\begin{figure}
\begin{center}
\psfrag{G}{$G$}
\psfrag{x}{$x$}
\psfrag{y}{$y$}
\psfrag{vi}{$v_i$}
\psfrag{vj}{$v_j$}
\psfrag{ui1}{$u_{i,1}$}
\psfrag{uj1}{$u_{j,1}$}
\psfrag{ui2}{$u_{i,2}$}
\psfrag{uim}{$u_{i,m}$}
\psfrag{uim1}{$u_{i,m+1}$}
\psfrag{ujm1}{$u_{j,m+1}$}
\psfrag{ujm2}{$u_{j,m+2}$}
\psfrag{uj2m}{$u_{j,2m}$}
\includegraphics[scale=0.5]{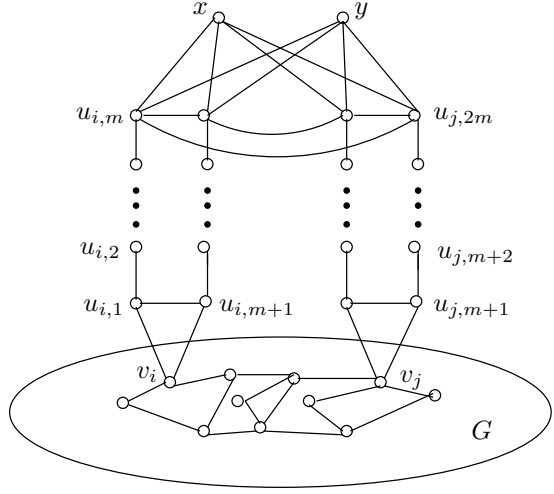}
\end{center}
\caption{A subgraph of the graph $G'$. The construction shows extra vertices added corresponding to vertices $v_i$ and $v_j$ 
belonging to $G$. The pair $(v_i, v_j) \in (V \times V) \setminus P$.}
\label{fig:long-paths}
\end{figure}

We claim that $G$ can be edge colored using $k$ colors such that every pair belonging
to $P$ is rainbow connected if and only if $rc(G') \le k$.
Assume that $G$ can be edge colored using $k$ colors such that all pairs 
in $P$ are rainbow connected. Let $\chi:E \rightarrow \{c_1, \ldots, c_{2m+1}\}$ 
be such a coloring. We obtain a coloring $\chi':E' \rightarrow \{c_1, \ldots, c_{2m+1}\}$ as follows:
\begin{itemize}
\item  For every $v_i \in V$: \\
 $\chi'(v_i,u_{i,1})=c_1$; $\chi'(v_i,u_{i,m+1})=c_{m+1};$ \\
 $\chi'(u_{i,j},u_{i,j+1})=c_{j+1}$ where $j \in \{1, \ldots, 2m-1\}$ and $j \pmod m \neq 0;$  \\
 $\chi'(x,u_{i,m})=c_{m+1};\ \chi'(x,u_{i,2m})=c_{2m+1};$ \\
 $\chi'(y,u_{i,m})=c_{2m+1}; \ \chi'(y,u_{i,2m})=c_1$. \\
If $m \neq 1$, $ \chi'(u_{i, 1}, u_{i, m+1}) = c_{m+1}; \ \chi'(u_{i, m}, u_{i, 2m}) = c_1$ \\
else $ \chi'(u_{i, 1}, u_{i, m+1}) = c_1$.
\item  For every $(v_i,v_j) \in (V \times V) \setminus P$: $\chi'(u_{i,m},u_{j,2m})=c_{2m+1}$ and $\chi'(u_{i,2m},u_{i,m})=c_{2m+1}$.
\item  For every edge $(v_i,v_j) \in E$: $\chi'(v_i,v_j)=\chi(v_i,v_j)$.
\end{itemize}
We claim that if $\chi$ makes the graph $G$ $k$-subset rainbow connected then $\chi'$ makes the graph $G'$ $k$-rainbow connected. To prove that, consider the following cases:
\begin{enumerate}
\item [-] Every vertex $v_i \in V$ has a rainbow path to every vertex $u \in V'\setminus V$. 
Indeed, if $u \in V_{x,y}$ then the path $v_i-u_{i,1}-\cdots- u_{i,m}-u$ 
is a rainbow path connecting $v_i$ and $u$. If $u = u_{j,s}, s \in \{1, \ldots, m\}, v_j \in V$ 
then: if $i=j$, then the path $v_i-u_{i,1}- \cdots- u_{i,s}$ 
is the rainbow path otherwise the path $v_i-u_{i,m+1}-\cdots u_{i,2m}-y-u_{j,m}-u_{j,m-1}-\cdots- u_{j,s}$ 
is the rainbow path connecting $v_i$ and $u$. 
Now if $u = v_{j,s}, s\in \{m+1, \ldots, 2m\}$ then: 
if $i=j$, then the path $v_i-v_{m+1}\ldots v_{s}$ otherwise 
the path $v_i-u_{i,1}-\cdots- u_{i,m}-x-u_{j,2m}-u_{j,2m-1}-\cdots-u_{j,s}$ is the rainbow path between $v_i$ and $u$.

\item [-] Every vertex $u \in V_{x,y}$ has a rainbow 
path to every vertex $u' \in V_{1,m} \cup V_{m+1,2m}$. 
Indeed, if $u'=u_{i,s},s\in \{1, \ldots, m\}$ the rainbow path 
is $u_{i,s}-\cdots-u_{i,m}-u$ and if 
$u' = u_{i,s}, s \in \{m+1,\ldots, 2m\}$ 
the rainbow path is $u_{i,s}-\cdots-u_{i,2m}-u$. 
Also $x,y$ are connected by a rainbow path $x-u_{i,m}-y$, for some $v_i \in V$.

\item [-] Every vertex $u_{i,s} \in V_{1,m}$ is rainbow connected to every vertex $u_{j,r} \in V_{m+1,2m}$: $u_{i,s}-\cdots-u_{i,m}-y-u_{j,2m}-u_{j,2m-1}-\cdots-u_{j,r}$ is a rainbow path. Every vertex $u_{i,s} \in V_{1,m}$ is rainbow connected to $u_{j,r} \in V_{1,m}$ (without loss of generality let $s \geq r$): if $i=j$ then the rainbow path is $u_{j,r}-u_{j,r+1}-\cdots-u_{i,s}$ otherwise $u_{j,r}-u_{j,r-1}- \ldots u_{j,1}-u_{j,m+1}-\cdots-u_{j,2m}-y-u_{i,m}-u_{i,m-1}-\cdots-u_{i,s}$ is the rainbow path. Every vertex $u_{i,s} \in V_{m,2m}$ is rainbow connected to every vertex $u_{j,r} \in V_{m,2m}$ (without loss of generality assume $s\geq r$): if $i = j$ then the rainbow path is $u_{j,r}-u_{j,r+1}-\cdots- u_{j,s}$ otherwise the path $u_{i,s}-u_{i,s+1}-\cdots- u_{i,2m}-y-u_{j,m}-u_{j,m-1}-\cdots-u_{j,1}-u_{j,m+1}-\cdots-u_{j,r}$ is a rainbow path.

\item [-] Every pair of vertices $v_i,v_j \notin P$ is rainbow connected: $v_i-u_{i,m+1}-\cdots-u_{i,2m}-u_{j,m}-u_{j,m-1}\ldots u_{j,1}-v_j$ is a rainbow connected path.

\item [-] Since every pair $v_i,v_j\in P$ is rainbow connected in $G$ which is an induced subgraph of $G'$, those pairs are rainbow connected in $G'$.

\end{enumerate}
From the above cases, $G'$ is $k$-rainbow connected if $G$ is $k$-subset rainbow connected.
\par To prove the other direction, assume that $rc(G') \leq k$. 
Let $\chi:E' \rightarrow \{c_1, \ldots ,c_k\}$ be an edge coloring 
of $G'$ such that $\chi$ makes $G'$ rainbow connected. We will 
translate this edge coloring of $G'$ to an edge coloring of $G$ as follows: 
color the edge $(v_i,v_j)$ in $G$ with the color $\chi(v_i,v_j)$. 
We claim that all pairs in $P$ are rainbow connected in $G$. 
This follows from the observation that for a pair $(v_i,v_j) \in P$, 
any path between $v_i$ and $v_j$ which is of length at most $2m+1$ in $G'$ has 
to be completely contained in $G$.
Since $\chi$ makes $G'$ rainbow connected, 
the rainbow path between $v_i$ and $v_j$ in $G'$ has to lie completely 
inside $G$ itself. Correspondingly, there is a rainbow path between $v_i$ and $v_j$ in $G$. 
Hence, all pairs in $P$ are rainbow connected in $G$. This proves that $k$-rainbow connectivity problem is NP-hard.  
\par It is clear that given an edge $k$-coloring, for $k\in \mathbb{N}$, we can check in polynomial time, that the edge coloring rainbow connects every pair of vertices. Hence the problem of deciding if $rc(G)\leq k$ is in NP. The result follows.
\end{proof}

\noindent Unlike the case of strong rainbow connectivity, the reduction presented above does not give any insight into the inapproximability of the problem of finding the rainbow connection number of a graph. The reason being that the reduction stated in the proof of Theorem~\ref{srcresult} yields an instance of $k$-strong rainbow connectivity problem which is independent of $k$ \textit{i.e.,} the structure of the graph does not change with $k$. On the contrary, the size of the instance of $k$-rainbow connectivity problem obtained from the reduction in Theorem~\ref{rcresult} crucially depends on $k$. 

\subsection{Parameterized complexity}
In this section, we study the parameterized complexity of a maximization version of the rainbow connectivity problem. Before that, we describe the necessary preliminaries. A problem is said to be fixed parameter tractable (FPT) with respect to a parameter $k$\footnote{A parameter is a natural number associated to a problem instance. For example, a parameter could be the number of vertices of a graph instance in a vertex cover problem or the number of processors in a scheduling problem.}, if given an instance $x$ of size $|x|$ there exists an algorithm with running time $f(k) \times |x|^{O(1)}$ where $f$ is a function of $k$ which is independent of $|x|$. One way of showing that a problem is fixed parameter tractable is to exhibit polynomial time reductions to obtain a {\em kernel}  which is basically an equivalent instance whose size is purely a function of the parameter $k$. If the size of the kernel is a linear function in $k$ then the kernel is termed as a {\em linear kernel}. For formal definitions, we refer the reader to~\cite{rod1999parameterized,flum2006parameterized}.
\par We are interested in the following problem: Given a graph $G = (V, E)$, color the edges of $G$ using $2$ colors such
that maximum number of pairs are rainbow connected. Since deciding whether $rc(G) \le 2$ is NP-Complete
\cite{chakraborty2008hardness}, it follows that the above maximization problem is NP-hard.
Any edge coloring of a graph $G = (V, E)$ with $2$ colors, trivially satisfies $|E|$ pairs. Hence,
we are interested in deciding whether $G$ can be $2$-colored such that at least $|E| + k$ pairs
of vertices are rainbow connected, where $k$ is a parameter. We show that the problem is {\em fixed
parameter tractable} with respect to $k$. 
We first state a useful lemma. Let us call a {\em non-edge} in $G$ as an anti-edge; formally we call $e = (u, v)$ an
anti-edge of a graph  $G = (V, E)$ if $e \notin E$.
\begin{lemma}
Let $G = (V, E)$ be a connected graph with at least $k$ anti-edges and a clique of size $\geq k$.
The edges of $G$ can be colored using $2$ colors such that at least $|E| + k$ pair of vertices
are rainbow-connected.
\label{lem:cliq}
\end{lemma}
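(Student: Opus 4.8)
The plan is to exhibit a $2$-colouring $\chi\colon E\to\{c_1,c_2\}$ that rainbow-connects at least $k$ anti-edges; since every edge is itself a rainbow path, any $2$-colouring already rainbow-connects all $|E|$ edge-pairs, so this suffices. With only two colours a rainbow path has length at most $2$, hence an anti-edge $(u,v)$ is rainbow-connected precisely when it has a common neighbour (a ``hub'') $w$ with $\chi(uw)\ne\chi(vw)$. So the task is: select $k$ anti-edges $e_1,\dots,e_k$, pick for each $e_i$ a hub $h_i$ adjacent to both its endpoints, and simultaneously satisfy the $k$ inequations ``the two edges of $e_i$ at $h_i$ get different colours'', $i=1,\dots,k$, using two colours. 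Viewing each inequation as an edge of an auxiliary graph on vertex set $E$, the system is satisfiable iff that auxiliary graph is bipartite; in particular it is automatically satisfiable if the $k$ inequations involve pairwise disjoint pairs of edges.

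I would first pass to a maximal clique $K$, so $|K|=m\ge k$ and every vertex outside $K$ misses at least one vertex of $K$. The clean case is when at least $k$ vertices $p_1,\dots,p_k$ outside $K$ each have a neighbour in $K$: choose a clique-neighbour $z_i$ of each $p_i$, colour all clique-internal edges $c_1$, recolour each $(z_i,p_i)$ to $c_2$, and colour everything else $c_1$. Since $p_i$ misses some $z_i'\in K$ (with $z_i'\ne z_i$, and $z_iz_i'\in E$), the path $p_i-z_i-z_i'$ has colours $c_2,c_1$, so the anti-edge $(p_i,z_i')$ is rainbow-connected; these $k$ anti-edges are distinct because the $p_i$ are distinct and lie outside $K$ while the $z_i'$ lie in $K$. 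A one-hub argument of the same flavour handles the case where a single outside vertex misses $\ge k$ clique vertices (recolour one of its edges into $K$ to $c_2$, keep clique-internal edges $c_1$).

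The remaining case — fewer than $k$ outside vertices are adjacent to $K$, and each of them misses fewer than $k$ vertices of $K$ — forces the attachment of $K$ to the rest of $G$ to be very restricted, and this is where the real work lies. Here I would produce $k$ connectable anti-edges (an anti-edge meeting $K$, or meeting a vertex at distance $2$ from $K$, has a common neighbour among the few vertices attached to $K$), assign them pairwise distinct hubs taken from the $\ge k$ vertices of $K$, and verify that the auxiliary constraint graph can be kept a forest — using that distinct hubs at a clique make the corresponding inequations edge-disjoint, and that the clique-internal edges are otherwise unconstrained and can absorb the inequalities that do arise — so that it is $2$-colourable; colouring accordingly completes the proof. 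I expect this last case — in particular, guaranteeing $k$ pairwise ``independent'' connectable anti-edges when the clique is only weakly joined to the rest of $G$ — to be the main obstacle; everything else is bookkeeping on top of the single observation that a clique of size $\ge k$ supplies a large, flexible reservoir of hubs.
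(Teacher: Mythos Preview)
Your first two cases are correct and cleanly argued, and the reformulation via an auxiliary ``inequality graph'' on $E$ is a nice way to phrase the task. The gap is case~3, which you yourself flag as the crux but do not actually prove. The sketch is internally inconsistent: you first say the relevant anti-edges have a hub ``among the few vertices attached to $K$'' (i.e.\ in $L_1$, the fewer-than-$k$ outside neighbours of $K$), and then propose to take all $k$ hubs from $K$. More substantively, your case-3 hypotheses ($|L_1|<k$ and each $L_1$-vertex misses fewer than $k$ clique vertices) do \emph{not} guarantee that an anti-edge inside $L_1$ has any hub in $K$: with $|K|=k$, two $L_1$-vertices may each see a single, different, clique vertex. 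Nor does case~3 force $L_2=\emptyset$, so you would still have to deal with distance-$2$ vertices separately. The plan ``pick $k$ connectable anti-edges, give them pairwise distinct hubs in $K$, and note the constraint graph is a forest'' therefore cannot be executed as stated.

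For comparison, the paper organises the cases differently and supplies the missing mechanism. It first disposes of the case where some vertex lies at distance~$2$ from the maximal clique $M$ by a three-layer colouring (clique red, $M$-to-$L_1$ blue, $L_1$-to-$L_2$ red): a single pair $u_1\in L_1$, $u_2\in L_2$ then yields $k$ rainbow-connected anti-edges. When $V=M\cup L_1$, it further splits on whether there are $\ge k$ anti-edges between $L_1$ and $M$; only under the stronger hypothesis ``fewer than $k$ such anti-edges'' can one conclude that any two $L_1$-vertices share a hub in $M$. In that final subcase the paper runs a greedy loop rainbow-connecting $L_1$-internal anti-edges one at a time, each consuming a fresh hub in $M$, followed by a local recolouring inside $M$ to repair any $L_1$-vertex whose edges to $M$ became monochromatic; if the loop halts after $r$ successes, the $\ge k-r$ unused clique vertices each witness an $L_1$-to-$M$ anti-edge, giving $k$ in total. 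It is precisely this counting-plus-recolouring argument that your case~3 is missing.
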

\begin{proof}
Let $M$ be a maximal clique of size $\geq k$. Let $L_i$
denote the set of vertices in $G$ which are at a distance $i$ from at least
one vertex of $M$. Since there are at least $k$ anti-edges and the graph is connected,
$L_1$ is non-empty. For $L_2$ we have two cases -- either $L_2$ is empty or $L_2$ is non-empty.
First consider the case when $L_2$ is non-empty.
We now show a coloring of the edges to ensure that $|E| +k$ pairs are rainbow connected.
Let us color the edges of the clique $M$ by red and the edges from $M$ to $L_1$ by blue.
Finally, color the edges from $L_1$ to $L_2$ as red. Let $u_2 \in L_2$ be adjacent to some
$u_1 \in L_1$. If $r$ vertices of $M$ are not adjacent to $u_1$, then at least $k-r$ vertices of $M$ are
adjacent to $u_1$ and hence have a path of length $2$ to $u_2$. The $r$ vertices of $M$ paired with $u_1$
and the rest of the vertices of $M$ paired with $u_2$ form the required $k$ anti-edges that
are rainbow connected.

It remains to deal with the case when $L_2$ is empty. If there are $k$ anti-edges from $M$ to
$L_1$ we can color the edges of $M$ by red and edges from $L_1$ to $M$ by blue
and we are done, else we have less than $k$ anti-edges between $L_1$ and
$M$. In this case, it is easy to observe that
there should be at least $2$ vertices in $L_1$ and every vertex in $L_1$
has at least $2$ neighbours in $M$. Consider any pair of vertices $u$ and $v$ in $L_1$.
Let $N_{M}(u)$ denotes the neighbourhood of $u$ in $M$, then we note that $N_{M}(u) \cap N_{M}(v) \neq \phi$.
Otherwise we already have $k$ anti-edges from $L_1$ to $M$. Our goal is now to rainbow connect as many pair
of anti-edges $(u, v)$ in $L_1$. We do this greedily as follows:
\begin{itemize}
\item Let $S$ denote the set of anti-edges in $L_1$, i.e.,\\ $S = \{(u, v): u \in L_1, v \in L_1 \text{ and } (u,v) \text{ is an anti-edge}\}$
\item All vertices $w \in M$ are unmarked to begin with.
\item while $ (S \neq \phi)$ do:
\begin{itemize}
\item Let $e = (u,v)$ be an anti-edge in $S$ 
such that there exists an unmarked $w \in M$ and $w \in (N_{M}(u) \cap N_{M}(v))$.
\item If no such $e$ exists, break.
\item Else, color $(u,w)$ as red and $(v, w)$ as blue. \\
$S = S \setminus \{e\}$.
\item Mark the vertex $w$.
\end{itemize}
\item end while.
\end{itemize}
Using this procedure, some anti-edges in $L_1$ are rainbow connected. We color the edges of
$M$ using red and the uncolored edges from $M$ to $L_1$ using blue.

We can assume that for every vertex $u$ in $L_1$, there is a blue edge $(u,w)$ such that $w \in M$. If not, consider a vertex $u \in L_1$ such that all the edges from $u$ to $M$ are colored red. Let there be $r$ such red edges incident on $u$. This implies that $r$ anti-edge pairs belonging to $L_1$ got rainbow connected in the above while loop. Further, note that all the vertices in $N_{M}(u)$ got marked in the while loop and hence at least $k-r$ vertices in $M$ are non-neighbors of $u$. We will rainbow connect at least $k-r$ anti-edge pairs by the following recoloring. Let $w \in N_{M}(u)$. Recolor all edges of the form $(w,w')$ where $w' \in M \setminus N_{M}(u)$ as blue. This will rainbow connect all the $(u,w')$ anti-edge pairs, thus rainbow connecting a total of $k$ anti-edge pairs.      

Now we are in the case where for every vertex $u \in L_1$, there is a blue edge incident from $u$ to $M$. Thus
all anti-edges from $L_1$ to $M$ are rainbow connected. We now consider the two cases in which we have quit the while loop. Suppose that we quit the while loop
because, the set $S$ was empty, then we are done. Because all the anti-edges in $L_1$ are rainbow connected,
further, all anti-edges from $L_1$ to $M$ are also rainbow connected. As there are at least $k$ anti-edges in $G$,
we are done. Finally, suppose the set $S$ is not empty when we quit the while loop. Assume that
we rainbow connect $r$ anti-edges in $S$ by the greedy procedure. This implies that
there are at least $k-r$ unmarked vertices in $M$. Consider an anti-edge $(u,v) \in S$ that our greedy procedure
could not rainbow connect. Then for every unmarked vertex $w \in M$, it has to be the case that either $u$ or
$v$ is non-adjacent to $w$. Thus we have at least $k-r$ anti-edges which are from $L_1$ to $M$. Thus, in this 
case also we have rainbow-connected $k$ anti-edge pairs.

This completes the proof of the lemma.

\end{proof}

Using the above lemma~\ref{lem:cliq} we now show that the problem is fixed parameter tractable. 
\begin{theorem}
Given a graph $G = (V, E)$, decide whether the edges of $G$ can be colored using $2$ colors
such that at least $|E| + k$ pair of vertices are rainbow connected.
The above problem has a kernel with at most $4k$ vertices and hence is fixed parameter tractable.
\label{stan:para}
\end{theorem}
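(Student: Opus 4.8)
With only two available colours, any rainbow path has at most two edges, and since a single edge is by itself a rainbow path, every pair of adjacent vertices is automatically rainbow connected. Hence the problem is equivalent to the following: $2$-colour $E$ so that at least $k$ \emph{anti-edges} $(u,v)$ become rainbow connected, where an anti-edge $(u,v)$ is rainbow connected under $\chi$ precisely when $u$ and $v$ have a common neighbour $w$ with $\chi(uw)\neq\chi(wv)$. I would open the proof by recording this reformulation, since it turns everything below into a statement purely about non-edges and common neighbours.

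The kernelization itself is a short case split. \emph{First,} compute the number of anti-edges, $\binom{|V|}{2}-|E|$. If it is smaller than $k$, then even colouring every edge differently the total number of rainbow-connected pairs is below $|E|+k$, so the answer is \emph{no}; output a fixed no-instance on a constant number of vertices. \emph{Second,} if $G$ has at least $k$ anti-edges and $|V|\le 4k$, just return $(G,k)$: it is already a kernel. \emph{Third,} if $G$ has at least $k$ anti-edges and $|V|>4k$, I claim the answer is necessarily \emph{yes}, and output a fixed yes-instance. Only the third case requires work.

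For the third case I would argue that a connected graph with at least $k$ anti-edges and more than $4k$ vertices admits a $2$-colouring rainbow-connecting $k$ anti-edges, via three mechanisms. (a) If some vertex $v$ has at least $2k$ anti-edges among its neighbours, a balanced $2$-colouring of the star at $v$ (half the incident edges in each colour) makes at least $k$ of those anti-edges bichromatic, hence rainbow connected. (b) If $G$ contains a clique of size $\ge k$, then since $G$ has $\ge k$ anti-edges, Lemma~\ref{lem:cliq} already produces the colouring. (c) In the remaining case I would greedily pack pairwise vertex-disjoint ``cherries'' --- induced paths $u\!-\!w\!-\!v$ with $(u,v)$ an anti-edge; if $k$ of them can be packed, colour the two edges of each cherry with the two colours (independent choices, as the cherries are disjoint) and finish. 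If the packing is maximal at $t<k$ cherries, let $F$ be their $3t$ vertices: maximality forces $G-F$ to contain no induced $P_3$, so every component of $G-F$ is a clique; a component of size $\ge k$ puts us in case (b), while if all components have size $\le k-1$ then $|V-F|=|V|-3t>4k-3k=k$ gives at least two components, each attached to $F$ by connectivity, and one rainbow-connects anti-edges whose endpoints lie in different components of $G-F$ but share a neighbour in $F$ (splitting the attachment edges at each vertex of $F$ suitably), combining these with the $t$ cherries --- their edge sets being disjoint, there is no conflict.

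The main obstacle is this last counting: one must verify that, with $t<k$ cherries used and $|V|>4k$, the components-plus-attachments accounting really delivers $k$ rainbow-connected anti-edges, i.e.\ that $4k$ vertices (rather than a larger polynomial in $k$) suffice; this is where the precise threshold in the statement is forced, and it likely needs either a sharper packing/augmentation argument in case (c) or an additional preprocessing rule that first contracts long low-degree paths of $G$ (on which the alternating colouring always rainbow-connects). Everything else --- the anti-edge count, the balanced-star colouring, and the appeal to Lemma~\ref{lem:cliq} once a $k$-clique appears --- is routine.
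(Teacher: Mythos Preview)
Your opening reformulation is correct and matches the paper's implicit use of it. Cases (a) and (b) are fine in spirit (though in (a) you want a max-cut-style bipartition of $N(v)$ rather than an arbitrary balanced split: a balanced split need not cut half the anti-edges). The real problem is case (c), and you already flag it: after a maximal cherry packing of size $t<k$ and removal of its $3t$ vertices $F$, you know $G-F$ is a disjoint union of small cliques, but you have no control over how many such cliques there are or how they attach to $F$. The $t$ cherries give you $t$ rainbow-connected anti-edges; the remaining $k-t$ must come from anti-edges between different cliques of $G-F$, which can only be rainbow-connected through a common neighbour in $F$. Nothing in the packing argument forces enough of those common neighbours to exist with edges you have not already committed, so the ``components-plus-attachments'' step is not a counting wrinkle but a missing structural lemma. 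As written, the argument does not establish that $|V|>4k$ forces a \emph{yes}.

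The paper sidesteps this entirely by working from a single vertex $v$ rather than a global packing. First, a BFS from $v$ with levels coloured alternately rainbow-connects $|O_v|$ anti-edges, where $O_v$ is the set of non-neighbours of $v$; hence one may assume $|O_v|<k$. Then one looks at the complement $H$ of $G[N(v)]$: if $H$ has $\ge k$ isolated vertices, those form a $k$-clique in $G$ and Lemma~\ref{lem:cliq} finishes; otherwise, two-colouring a spanning tree of each component $C_i$ of $H$ (and transferring the vertex colours to the edges $(v,u)$) rainbow-connects $\sum(|C_i|-1)$ anti-edges, so one may assume $\sum|C_i|-r<k$. These three inequalities combine directly to give $|N(v)|<3k$, and then $|V|=1+|N(v)|+|O_v|<4k$. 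The whole kernel bound thus falls out of analysing one neighbourhood, with no packing or attachment bookkeeping needed.
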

\begin{proof}
Let $v$ be any arbitrary vertex and let $O_v$ be the set of vertices which are not adjacent to $v$.
We claim that there is a coloring which rainbow connects at least
$|O_v|$ pair of non-adjacent vertices. Consider a breadth first search (bfs)
 tree rooted at $v$. Denote the set of vertices in each level of 
the bfs tree by $L_i$, $i \geq 1$. Then, $L_1=\{v\}$, $L_2=N(v)$ and $O_v= \cup_{i > 2} L_i$.
We now color the edges from $L_{i-1}$ to $L_{i}$ by red
if $i$ is odd and by blue if $i$ is even.
For $i > 2$, every vertex of $L_i$ is rainbow connected to some vertex of $L_{i-2}$.
Thus we have $|O_v|$ pairs of non-adjacent vertices rainbow connected by this coloring.
Hence if $|O_v| \ge k$ for
any vertex $v \in V$, we have a trivial {\em yes} instance at hand.
Otherwise, $|O_v| < k$, for all $v \in V$.

Recall that our goal is to color the graph using $2$ colors such that at least $|E| + k$
pair of vertices are rainbow connected. If $G$ has less than $k$ anti-edges, clearly
this is not possible and we have a {\em no} instance. Assume that this is not the case.
Now consider a vertex $v$ and let $N(v)$ denote the neighbors of $v$ in $G$. Let $H$
denote the complement of the graph induced by the neighbourhood of $v$, ie the complement of $G[N(v)]$\footnote{$G[H]$ denotes the induced subgraph of $G$ on vertices of $H$}
Further, let $C_1, C_2,\ldots, C_r$ denote the components of $H$. If there are more than
$k$ isolated vertices in $H$, we have a clique of size $\ge k$ in $G$. Further, since there
are at least $k$ anti-edges, using lemma~\ref{lem:cliq}, we have a coloring which rainbow connects
at least $|E| + k$ pairs of $G$. Thus we have a {\em yes} instance.

It remains to deal with the
case when the number of isolated vertices in $H$ is less than $k$. Let $C_i$ be some non-trivial component of $H$, that
is $C_i$ contains at least two vertices. (If no non-trivial component exists,
we are already done, since we can bound the number of vertices of $G$ from above
by $2k$). We now show a coloring of edges of $G$ such that
at least $|C_i| - 1$ non-adjacent vertices are rainbow connected. For this, consider
a spanning tree of $C_i$ and color the vertices of the spanning tree level by level using alternate colors.
That is, color the root as red, the vertices at the next level in the spanning tree as blue and so on.
We map the colors on the vertices of $C_i$ back to the edges of $G$ as follows. If a vertex $u_1 \in C_i$
got the color red, we color the edge $(v, u_1) \in G$ as red. Thus for every edge $(u_1, u_2)$ in $C_i$ that
got distinct colors on its end points, we ensure that one pair got rainbow connected via the path
$u_1, v, u_2$. Further, since $(u_1, u_2)$ is an edge in $H$, it is an anti-edge in $G$. Thus for every non-trivial
component $C_i$ we can rainbow connect $|C_i| - 1$ anti-edges of $G$. Counting this across all
the components we have $\sum_{i=1}^r |C_i| - r$ pairs of anti-edges in $G$ rainbow connected.
If $\sum_{i=1}^r |C_i| - r \ge k$ we have a {\em yes} instance, otherwise we have:

\begin{equation}
\label{eq1}
\Sigma_{i=1}^{r} |C_i| - r < k.
\end{equation}
Let the number of non-trivial components of $H$ be $s$. Each of these non-trivial
components have at least $2$ vertices. Hence we have the following:
\begin{equation}
\label{eq2}
\Sigma_{i=1}^{r} |C_i| \geq 2*s + (r-s) = r + s
\end{equation}
Since the number of isolated vertices in $H$ is strictly less than $k$,
we have $r < s+k$. Further, from equations (\ref{eq1}) and (\ref{eq2}) we get $s < k$. Combining these
we have $r < 2k$.
Thus we can bound the number of vertices in $H$ as:
\begin{equation}
|H| = \Sigma_{i=1}^{r} |C_i|  < r + k < 3k
\end{equation}
Therefore we have:
\begin{equation}
|G| = |H| + 1 + |O_v| < 3k + 1 + k \implies |G| \leq 4k.
\end{equation}
Hence, we have a $4k$ vertex kernel.
\end{proof}

\section{Rainbow connectivity on directed graphs}
\label{rcd}
In this section, we consider the rainbow connectivity problem
for directed graphs. All the directed graphs considered in this section are assumed to be connected \textit{i.e.}, between any two vertices $u,v$ in the directed graph there is either a directed path from $u$ to $v$ or from $v$ to $u$.  
Consider an edge-coloring of  
a directed graph $G=(V,E)$. We say that there exists a rainbow path
between a pair of vertices $(u,v)$ if there exists a
directed path from $u$ to $v$ or from $v$ to $u$ with distinct edge colors.
 An edge coloring of the
edges in a directed graph is said to make the graph rainbow connected if
between every pair of vertices there is a rainbow path. Analogous to the undirected version,
the minimum colors needed to rainbow color a
directed graph $G$ is called the rainbow connection number
of the directed graph. The rainbow connection number of a directed graph is at least the rainbow connection number of the underlying undirected graph; however, there are examples where the directed graph requires many more colors than the underlying undirected graph. Consider the directed graph $G=(V,E)$ with $V=\{v_1, \ldots ,v_n\}$ and $E=\{(v_i,v_{i+1}): i=1, \ldots ,n-1 \} \cup \{(v_1,v_n)\}$. The rainbow connection number of $G$ is $n-2$ while the rainbow connection number of its underlying undirected graph, which is a cycle, is $\lceil \frac{n}{2} \rceil$.
\par We study the computational complexity of the problem of computing rainbow connection number for a directed graph. We prove that the problem of deciding
whether the rainbow connection of a simple directed graph is at most
$2$ is NP-hard. As in the case of undirected graphs, 
we define the problem of subset rainbow connectivity on directed graphs. Given
a directed graph $G = (V, E)$ and a set of pairs $P \subseteq V \times V$ decide whether
the edges of $G$ can be colored using $2$ colors such that every pair in $P$ 
is rainbow connected (in the directed sense). Throughout this section we will use the
term rainbow connected to mean that it is rainbow connected in the directed sense.
Our plan, as in the previous cases, is to show that the $2$-subset rainbow rainbow connectivity is NP-hard by showing a reduction from the 3SAT problem. We then establish the polynomial time equivalence 
of $2$-subset rainbow connectivity and $2$-rainbow connectivity for a directed graph $G$.

Let $\mathcal{I}$ be an instance of the 3SAT problem with $X = \{x_1, \ldots, x_n\}$ as the
set of variables and $C_1, \ldots, C_m$ being the clauses. We construct from $\mathcal{I}$
a directed graph $G = (V, E)$ and a set of pairs $P \subseteq V \times V$ which is an instance
of the 2-subset rainbow connectivity problem. For readability sake, we reuse the symbols $C_i,x_i$ to represent the vertices.
\begin{eqnarray*}
V &=& \{C_i: i\in \{1, \ldots, m\}\} \cup X \cup \bar{X}  \cup \{T,R,B\} \\
\bar{X} &=& \{\bar{x_i}: x_i \in X\}
\end{eqnarray*}
The edge set $E$ is defined as below. We say that $x_i \in C_j$ to imply that the clause $C_j$ contains
the positive occurrence of the variable $x_i$. If $x_i$ appears negated in the clause $C_j$
we denote it as $\bar{x_i} \in C_j$.
\begin{eqnarray*}
E &=& \{(R,T),(T,B)\}  \cup  \\
  & \ &  \{(x_i,T),(T,\bar{x_i}),(x_i,\bar{x_i}) : x_i\in X\} \cup \\
  & \ & \{(C_j,x_i) : x_i \in C_j\} \cup \\
  & \ & \{(\bar{x_i}, C_j): \bar{x_i} \in C_j\}
\end{eqnarray*}
The set of pairs $P$ is defined as follows:
\begin{eqnarray*}
P &=& \{(C_i,T) : i \in \{1, \ldots, m\} \} \cup  \\
  & \ & \{(x_i,C_j),(\bar{x_i},C_j) : x_i \in C_j\} \cup \\
  & \ & \{(x_i,C_j),(\bar{x_i},C_j) : \bar{x_i} \in C_j\} \cup \\
  & \ &  \{(R,B)\} \cup \{(R, \bar{x_i}),(B,x_i) : x_i\in X\} 
\end{eqnarray*}

We now state the following lemma which establishes the correctness of our reduction.
\begin{lemma}
\label{lem:dir1}
There exists a satisfying assignment for $\mathcal{I}$ if and only if there is an 
edge coloring of $G = (V, E)$ with $2$ colors such that all the pairs in $P$ are
rainbow connected.
\end{lemma}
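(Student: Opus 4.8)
The plan is to prove both directions by carefully examining which edges of $G$ can carry which colors, given the constraints imposed by the pairs in $P$. The key structural observation is that several pairs in $P$ have a \emph{unique} short path between their endpoints, which forces the colors of certain edges, and these forced colors will encode a truth assignment.

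First I would analyze the forcing. Consider the pairs $(R,\bar{x_i})$ and $(B,x_i)$ in $P$. The only directed paths of length $\le 2$ realizing these are $R - T - \bar{x_i}$ and $x_i - T - B$ respectively (any alternative route must pass through clause vertices and is longer, hence cannot be a rainbow path on $2$ colors). Therefore $\coloring(R,T) \neq \coloring(T,\bar{x_i})$ and $\coloring(x_i,T) \neq \coloring(T,B)$ for every $i$. Also the pair $(R,B)$ forces the path $R - T - B$ to be rainbow, so $\coloring(R,T) \neq \coloring(T,B)$; say $\coloring(R,T) = $ color $1$ and $\coloring(T,B) = $ color $2$. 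Then $\coloring(T,\bar{x_i}) = 2$ and $\coloring(x_i, T) = 1$ are forced for all $i$. Next, the pair $(C_j, T)$ in $P$: the short paths from $C_j$ to $T$ go $C_j - x_i - T$ (when $x_i \in C_j$) or must use an edge $(\bar{x_i}, C_j)$ which points the wrong way, so in fact the only usable routes are $C_j - x_i - T$ through a positive literal, giving $\coloring(C_j, x_i) \neq \coloring(x_i, T) = 1$, i.e.\ $\coloring(C_j,x_i) = 2$ for \emph{every} $x_i \in C_j$; and similarly the pair $(\bar{x_i}, C_j)$ with $\bar{x_i} \in C_j$ together with the path $T - \bar{x_i} - C_j$ being the route for some pair will pin down $\coloring(\bar{x_i}, C_j)$. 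The genuinely free choice left is the color of each edge $(x_i, \bar{x_i})$; I will let this bit encode the truth value of $x_i$ (say color $1$ means $x_i$ true, color $2$ means $x_i$ false).

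With the encoding fixed, I would check the remaining pairs. The pairs $(x_i, C_j)$ and $(\bar{x_i}, C_j)$ for the literal occurring in $C_j$ are the ones that ``test'' the clause: I would show that such a pair admits a rainbow path iff the relevant literal satisfies $C_j$, by tracing the available $2$-colored paths (e.g.\ $x_i - \bar{x_i} - C_j$, or $x_i - T - \bar{x_i} - C_j$, etc.) and observing that exactly one choice of $\coloring(x_i,\bar{x_i})$ makes one of them rainbow. Conversely, given a satisfying assignment, I would exhibit the explicit $2$-coloring (colors of $(R,T),(T,B),(x_i,T),(T,\bar x_i),(C_j,x_i),(\bar x_i,C_j)$ as forced above, and $\coloring(x_i,\bar x_i)$ according to the truth value) and verify pair by pair — grouped as in the forward direction of Theorem~\ref{rcresult} — that every pair of $P$ gets a rainbow path, using the clause being satisfied precisely where the literal-to-clause pairs need it.

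The main obstacle I anticipate is the bookkeeping in the converse direction: one must be sure that the \emph{alternative} longer routes between a pair never accidentally provide a rainbow path that would let an unsatisfied clause slip through, so the case analysis of all directed paths of length $\le 2$ (and ruling out that longer paths can be rainbow on two colors) has to be exhaustive. Handling clause vertices with in-degree and out-degree from multiple literals, and making sure the pair $(C_i,T)$ forces the ``clause reads $2$'' condition uniformly, is the delicate point; once that is pinned down, matching it against the free bits $\coloring(x_i,\bar x_i)$ yields the satisfiability equivalence.
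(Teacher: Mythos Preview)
Your forcing analysis for $(R,B)$, $(R,\bar{x_i})$, $(B,x_i)$ is correct and matches the paper, but the next step misreads both the definition and the logic of the pair $(C_j,T)$. In the directed setting a pair is rainbow connected if there is a rainbow directed path in \emph{either} direction, so $(C_j,T)$ can also be served by $T-\bar{x_i}-C_j$ whenever $\bar{x_i}\in C_j$; you discard these routes. And even restricted to the routes $C_j-x_i-T$, the pair only requires \emph{some} such path to be rainbow, not all of them; so the conclusion ``$\coloring(C_j,x_i)=2$ for every $x_i\in C_j$'' does not follow. Nothing about $(C_j,T)$ by itself pins down the literal--clause edges.

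The roles of the two families of pairs are the reverse of what you describe. It is the pairs $(\bar{x_i},C_j)$ with $x_i\in C_j$ (unique length-$2$ realization $C_j-x_i-\bar{x_i}$) and $(x_i,C_j)$ with $\bar{x_i}\in C_j$ (unique length-$2$ realization $x_i-\bar{x_i}-C_j$) that force $\coloring(C_j,x_i)\neq\coloring(x_i,\bar{x_i})$ and $\coloring(\bar{x_i},C_j)\neq\coloring(x_i,\bar{x_i})$, so all literal--clause edges for a fixed variable share one color. With that consistency in place, the pair $(C_j,T)$ --- via its candidate routes $C_j-x_i-T$ and $T-\bar{x_i}-C_j$ ranging over the literals of $C_j$ --- is the clause test: it is rainbow connected iff some literal of $C_j$ is set true. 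The paper then reads off the assignment from the colors of the edges $(C_j,x_i)$ (equivalently from $\coloring(x_i,\bar{x_i})$, so your choice of encoding bit is fine; only the analysis of which pairs do what needs to be corrected).
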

\begin{proof}
Let us assume that $\mathcal{I}$ has a satisfying assignment. Using the assignment,
we will show a coloring of the edges $\chi$ of $G$ using $2$ colors $red$ and $blue$ 
such that all pairs in $P$ are rainbow connected.
\begin{itemize}
\item $\chi(R, T) = red$; $\chi(T, B) = blue$.
\item $\chi(x_i, T) = red$ for $x_i \in X$ and $\chi(T, \bar{x_i}) = blue$ for $x_i \in X$.
\item If $x_i$ is set to $true$ in the satisfying assignment, we set $\chi(C_j, x_i) = blue$ for $x_i \in C_j$;
$\chi(\bar{x_i}, C_j) = blue$ for $\bar{x_i} \in C_j$ and $\chi(x_i, \bar{x_i}) = red$.
If $x_i$ is set to $false$ in the satisfying assignment, set $\chi(C_j, x_i) = red$ for $x_i \in C_j$;
$\chi(\bar{x_i}, C_j) = red$ for $\bar{x_i} \in C_j$ and $\chi(x_i, \bar{x_i}) = blue$
\end{itemize}
To see that the above coloring rainbow connects all the pairs in $P$, we
first note that this is trivially true for all pairs of $P$ which involve $R$ or $B$. Since the coloring is obtained from a satisfying assignment of $\mathcal{I}$, every
clause has some literal which is set to true. Let $x_i \in C_j$ be set to true, then
the path $C_j, x_i, T$ is a directed rainbow path from $C_j$ to $T$. Else if
$\bar{x_i} \in C_j$ is set to true then the path $T, \bar{x_i}, C_j$ is a
rainbow path from $T$ to $C_j$. Finally, it is easy to observe that if $x_i \in C_j$
or $\bar{x_i} \in C_j$, then
we have a rainbow path connecting $x_i$ to $C_j$ and $\bar{x_i}$ to $C_j$.
Thus the coloring ensures that $G$ is $2$-subset rainbow connected.

To prove the other direction, assume that $G$ can be edge-colored using $2$ colors
such that all the pairs in $P$ have a rainbow path. We show that $\mathcal{I}$ has a satisfying assignment.
Assume, without loss of generality, that the edge $(R, T)$ is colored $red$. The color on the edge $(T,B)$ has
to be different from the color on the edge $(R, T)$ since $(R, B) \in P$ and the path $R-T-B$ is the
only directed path in $G$. Hence color of the edge $(T, B)$ is $blue$. This forces all the edges $\{(T, \bar{x_i}): x_i \in X\}$
to be colored blue and all the edges $\{(x_i, T): x_i \in X\}$ to be colored $red$. (This is because
we have pairs $(R, \bar{x_i})$ and $(B, x_i)$ in $P$.) We also observe that for any variable $x_i \in X$
edges of the form $(C_j, x_i)$ and edges of the form $(\bar{x_i}, C_j)$ have to get the same color
by the construction of our pairs. We now construct a truth assignment in the following way:
\begin{enumerate}
\item For any $x_i$, if any edge $(C_j, x_i)$ incident on it is colored $blue$, assign $x_i$ as $true$ else
assign $x_i$ as $false$.
\item If no edge of the form $(C_j, x_i)$ is incident on $x_i$, assign $x_i$ false.
\end{enumerate}
It is easy to verify that this assignment is a satisfying assignment for $\mathcal{I}$
since the graph $G$ is $2$-subset rainbow connected.
\end{proof}

\noindent We now prove the equivalence of the following two problems. 
\begin{lemma}
\label{lem:dir2}
The following two problems are polynomial time equivalent:\\
(1) Given a directed graph $G = (V, E)$ decide whether $G$ is $2$-rainbow connected. \\
(2) Given a directed graph $G = (V, E)$, and a set of pairs $P \subseteq V \times V$,
decide whether $\langle G, P \rangle$ is $2$-subset rainbow connected.
\end{lemma}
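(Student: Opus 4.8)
The plan is to prove the two directions of the equivalence separately, with the forward direction being essentially trivial and the backward direction ($2$-subset $\Rightarrow$ $2$-rainbow) being the substantive part, handled by a gadget construction entirely parallel to the one used in Theorem~\ref{srcresult} and Theorem~\ref{rcresult}. For the easy direction, given an instance $G = (V,E)$ of $2$-rainbow connectivity, simply take $P = V \times V$; then $\langle G, P\rangle$ is a $2$-subset rainbow connectivity instance and the two are equivalent by definition. So the reduction we must actually build goes from $2$-subset rainbow connectivity to $2$-rainbow connectivity.

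For that reduction, I would take an instance $\langle G = (V,E), P\rangle$ and construct a directed graph $G'$ by adding, for every pair $(u,v) \in (V \times V) \setminus P$ that we do \emph{not} wish to force, a ``bypass'' gadget ensuring those pairs are automatically rainbow connected in $G'$ regardless of the coloring, while every pair in $P$ can only be rainbow connected through a short path lying inside the original graph $G$. Concretely, I would add a small constant number of new ``hub'' vertices and, for each non-forced pair, a length-$2$ directed detour through a fresh vertex whose two incident edges can always be given distinct colors (e.g.\ route each ordered non-pair through its own private intermediate vertex). One must also rainbow-connect all the newly added vertices to everything else and to each other; this is arranged exactly as in the undirected construction of Theorem~\ref{rcresult}, using a couple of global hub vertices $x, y$ with prescribed colors on their incident edges so that every new vertex reaches every other vertex along a length-$\le 2$ rainbow path. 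The key structural invariant to verify is: for a pair $(v_i, v_j) \in P$, \emph{every} directed path of length $\le 2$ between them is contained in $G$, so that any $2$-rainbow coloring of $G'$, when restricted to $E$, certifies that $(v_i,v_j)$ is rainbow connected in $G$; conversely, a $2$-subset rainbow coloring of $\langle G, P\rangle$ extends to $G'$ by coloring the gadget edges with the fixed scheme.

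The main obstacle, and the step I would spend the most care on, is ensuring the gadget does not accidentally create \emph{new} short directed paths between pairs in $P$ that could be rainbow-colored ``for free'' in $G'$ without a corresponding path in $G$ — this is the directed analogue of the ``any path of length $\le 2m+1$ must stay inside $G$'' argument in Theorem~\ref{rcresult}, but directedness makes the path-counting more delicate since one must track both possible orientations. I would handle this by orienting all gadget edges so that the hub vertices $x, y$ and the private intermediate vertices are never internal to a short directed path between two original vertices (e.g.\ make all original-to-gadget edges point ``into'' the gadget and all gadget-to-original edges emanate only from a distinguished layer), so that the only directed paths of length $\le 2$ between $v_i$ and $v_j$ are the ones already present in $G$. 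Once that invariant is established, both implications follow by the same routine coloring-extension and coloring-restriction arguments used in the earlier proofs, and the polynomial bound on $|V'|$ is immediate since we add $O(|V|^2)$ vertices and edges.
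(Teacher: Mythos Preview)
Your proposal is correct and follows essentially the same approach as the paper: add a private intermediate vertex $w_{i,j}$ for each ordered non-pair $(v_i,v_j)\notin P$ to create a two-edge directed bypass, add hub vertices to connect the new vertices to the old ones, and verify that pairs in $P$ acquire no new length-$2$ directed paths in $G'$. The only implementation difference is that the paper uses a single hub $v_{ex}$ (with edges $V\to v_{ex}\to V_1$) and makes the set $V_1$ of bypass vertices into a tournament, so that any two bypass vertices are joined by a single arc and hence trivially rainbow connected---this neatly sidesteps the orientation bookkeeping you flagged for connecting the new vertices to one another.
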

\begin{proof}
It suffices to prove that problem (2) reduces to problem (1).
Given $\langle G = (V, E), P\rangle$ we construct an instance $G' = (V', E')$ as follows:
\begin{eqnarray*}
V' &=& V \cup V_1 \cup \{v_{ex}\} \\
V_1 &=& \{ w_{i,j} : (v_i, v_j)  \in (V \times V) \setminus P, v_i \neq v_j \}
\end{eqnarray*}
The edge set $E'$ is defined as:
\begin{eqnarray*}
E' &=& E \cup \{(v_i, w_{i,j}), (w_{i, j}, v_j) : (v_i, v_j)  \in (V \times V) \setminus P, v_i \neq v_j \} \cup \\
   & \ & \{ (v, v_{ex}), (v_{ex}, x) : v \in V, x \in V_1 \} \cup E_1
\end{eqnarray*}
The set of edges in $E_1$ are amongst the vertices in $V_1$ such that
the induced subgraph $T = (V_1, E_1)$ is a tournament. 
\par Assume that $G$ has an edge coloring $\chi$ using two colors, say $red$
and $blue$ such that
every pair of vertices in $P$ is rainbow connected. We give a coloring
$\chi'$ the edges
of $G'$ as follows:
\begin{itemize}
\item Set $\{\chi'(v, v_{ex}) = red : v \in V\}$ 
and set $\{\chi'(v_{ex}, x) = blue : x \in V_1\}$.
\item For every pair $(v_i, v_j)  \in (V \times V) \setminus P$, we set
$\chi'(v_i, w_{i,j}) = red$ and $\chi'(w_{i,j}, v_j) = blue$.
\item Color the edges of the graph induced by $V_1$ arbitrarily.
\item Set $\{\chi'(v_i, v_j) = \chi(v_i, v_j) : v_i \in V, v_j \in V\}$.
\end{itemize}
It is easy to verify that the above coloring makes $G'$ rainbow connected.

In the other direction, we note that no pair of vertices in $P$ has a 
directed $2$ length path in $G'$ which is not contained entirely in $G$.
Hence if $G'$ has an edge coloring using $2$ colors such that every pair has
a rainbow path, then the coloring of the induced subgraph $G$ of $G'$ rainbow
connects every pair of vertices in $P$.
This completes the proof of the lemma.
\end{proof}
Using lemma~\ref{lem:dir1} and lemma~\ref{lem:dir2} we can conclude the following theorem.
\begin{theorem}
Given a directed graph $G = (V, E)$, it is NP-hard to decide whether $G$ can be
colored using two colors such that between every pair of vertices there is a rainbow path.
\end{theorem}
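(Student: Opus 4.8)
The plan is to simply compose the two lemmas established in this section, using transitivity of polynomial-time reductions. First I would recall that 3SAT is NP-complete. By Lemma~\ref{lem:dir1}, the map $\mathcal{I}\mapsto\langle G,P\rangle$ is a correctness-preserving reduction from 3SAT to the $2$-subset rainbow connectivity problem on directed graphs; one checks it is computable in polynomial time, since $|V|=m+2n+3$ and $|E|,|P|$ are clearly bounded by a polynomial in $n+m$. Hence $2$-subset rainbow connectivity on directed graphs is NP-hard. Next I would invoke the direction of Lemma~\ref{lem:dir2} which states that $2$-subset rainbow connectivity reduces in polynomial time to $2$-rainbow connectivity on directed graphs: from $\langle G,P\rangle$ we build $G'$ with $|V'|\le |V|+|V|^2+1$ vertices, a tournament on $V_1$, and $O(|V|^2)$ extra edges, all constructible in polynomial time. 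Composing the two reductions yields a polynomial-time reduction from 3SAT to directed $2$-rainbow connectivity, which gives the claimed NP-hardness.

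One bookkeeping point I would dispatch explicitly is that the directed graphs produced must respect the standing convention of this section, namely that between any two vertices there is a directed path in at least one direction. In $G'$ from Lemma~\ref{lem:dir2} this is routine: every vertex of $V$ has an edge into $v_{ex}$, $v_{ex}$ has an edge into every vertex of $V_1$, the induced subgraph on $V_1$ is a tournament, and each $w_{i,j}$ lies on a directed path $v_i\to w_{i,j}\to v_j$; chaining these observations shows every ordered pair of vertices is joined by a directed path in one direction or the other. I would fold a one-line version of this verification into the argument so that the output of the chained reduction is a legitimate instance.

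I would also remark, even though the statement only asks for NP-hardness, that the problem lies in NP: under a $2$-coloring any rainbow path uses at most two edges, so a candidate solution can be checked by examining, for every pair $(u,v)$, the (polynomially many) directed paths of length at most two from $u$ to $v$ and from $v$ to $u$. Combined with the hardness reduction this in fact gives NP-completeness of directed $2$-rainbow connectivity.

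The substantive content of the proof is entirely contained in Lemmas~\ref{lem:dir1} and~\ref{lem:dir2}, so there is no real obstacle left in this theorem itself. The only thing requiring care is the routine verification that both reductions run in polynomial time (the size bounds above) and that the intermediate and final instances satisfy the connectivity convention; granting these, the theorem is immediate by transitivity of polynomial-time many-one reductions.
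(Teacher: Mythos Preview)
Your proposal is correct and follows exactly the same approach as the paper: the paper's entire proof is the sentence ``Using Lemma~\ref{lem:dir1} and Lemma~\ref{lem:dir2} we can conclude the following theorem,'' and you have simply spelled out the composition of these two reductions with added bookkeeping on polynomial-time bounds, the connectivity convention, and NP membership. Your write-up is in fact more careful than the paper's, but the underlying argument is identical.
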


\section{Conclusion}
\label{concl}
In this paper, we present several hardness results related to the rainbow connectivity
problem. The hardness results for the strong rainbow connectivity and rainbow connectivity
problem are due to a series of reductions starting from the vertex coloring problem. 
Our study on parameterized version of the rainbow
connectivity problem shows a linear kernel when we want to maximize the number
of pairs which are rainbow connected using two colors. 
We initiate the study of rainbow connectivity in directed graphs. Further, we show that the problem of deciding whether a directed graph can be rainbow connected
using at most $2$ colors is NP-hard. 

\section{Acknowledgements}   
The first and the second author would like to thank Deepak Rajendraprasad and Dr.~L.~Sunil Chandran for the useful discussions on the topic.

\bibliographystyle{plain}
\bibliography{prabhanjan}

\end{document}